\documentclass[12pt,a4paper]{article}

\usepackage[latin1]{inputenc}
\usepackage{amsfonts, amsmath, amssymb, amsthm, thmtools}
\usepackage{color, enumitem, fancyhdr, float, layout, multirow, setspace, verbatim}
\usepackage{graphicx}
\usepackage{caption, subcaption}
\usepackage{cleveref, nameref}
\usepackage[colorlinks=true, urlcolor=blue, citecolor=blue, linkcolor=blue]{hyperref}
\usepackage[authoryear]{natbib}
\usepackage[all]{xy}
\usepackage{authblk}

\textheight25cm \footskip0.5cm \topmargin-1.5cm \headsep0.2cm
\oddsidemargin-1.5cm \evensidemargin-1.5cm
\marginparwidth0cm \marginparsep = 0pt
\textwidth19cm
\doublespacing

\newtheorem{description_}{Description}
\newtheorem{definition}{Definition}
\newtheorem{example}{Example}
\newtheorem{lemma}{Lemma}
\newtheorem{theorem}{Theorem}

\declaretheorem[name=Restriction, refname={restriction, restrictions}, Refname={Restriction, Restrictions}]{restriction_}

\title{Compensating for the loss of a chance}
\author[1]{Rafael Stern\thanks{rafaelst@stat.cmu.edu}}
\author[1]{Joseph Kadane\thanks{kadane@stat.cmu.edu}}
\affil[1]{Department of Statistics, Carnegie Mellon University}

\date{\today}

\begin{document}
\maketitle 

\abstract{Civil liability for a lost chance applies to cases in which a tortious action changes the probabilities of the outcomes that can be obtained by the victim. A central point in the application of this type of liability is the valuation of damages. Despite the practical importance of the valuation of lost chances, the legal restrictions that guide it have rarely been discussed explicitly. In order to discuss these restrictions, we propose an abstract description of a lost chance case in which there are multiple possible outcomes and the victim can make a choice that affects these outcomes. Given this description, we propose six conceptual questions to guide the valuation of lost chances. We discuss alternative answers to these questions and present the formulas that derive from them. More specifically, we show that the main formulas that have been proposed for medical misdiagnosis cases are particular instances of the alternatives we discuss.}

{\bf Keywords:} lost-chance valuation, subjective expected utility theory, counterfactual outcomes, medical misdiagnosis, lost-chance doctrine.

\section{Current valuation of lost chances}
\label{introduction}

Civil liability for a lost chance applies to cases in which a tortious action changes the probabilities of the outcomes that can be obtained by the victim\footnote{Specifically, \citet{King1998}[p.495] argues that ``the loss-of-a-chance doctrine should operate when the following criteria are present: (1) the defendant tortiously failed to satisfy a duty owed to the victim to protect or preserve the victim's prospects for some favorable outcome; (2) either (a) the duty owed to the victim was based on a special relationship, undertaking, or other basis sufficient to support a preexisting duty to protect the victim's likelihood of a more favorable outcome, or (b) the only question was how to reflect the presence of a preexisting condition in calculating the damages for a materialized injury that the defendant is proven to have probably actively, tortiously caused; (3) the defendant's tortious conduct reduced the likelihood that the victim would have otherwise achieved a more favorable outcome; and (4) the defendant's tortious conduct was the reason it was not feasible to determine precisely whether or not the more favorable outcome would have materialized bur for the tortious conduct''.}. The element of chance is responsible for differences between typical tort cases and lost chance cases. For example, in a typical tort case one proves that, if not for the tortious action, a better outcome certainly would have occurred. By contrast, in a lost-chance case, one proves that, if not for the tortious action, there would be a higher probability of a given desirable outcome. This different type of proof has been applied to varying degrees in Brazil \citep{Silva2013}, England \citep{Smith1999}, France \citep{Viney1998}[p. 74], Italy \citep{Miceli2013}, Portugal \citep{Ferreira2013} and the USA \citep{Fischer2001, Koch2009}.  

A central point in the application of civil liability for a lost chance is the valuation of damages \citep{King1981}. The most commonly used method of valuation is that of proportional damage. This method applies to situations in which, had the tortious action not occurred, the victim would obtain either their current outcome or a better outcome with, respectively, probabilities $1-p_{0}$ and $p_{0}$. Also, the tortious action reduces the probability of a better outcome from $p_{0}$ to $0$. For example, this situation is found when a lawyer negligently misses a deadline for an appeal. According to the rule of proportional damage, if the difference between the values of the better and the current outcome is $\Delta v$, then the value of the lost chance is $p_{0} \cdot \Delta v$. 

The hypothetical situation described in the rule of proportional damage is often not found in practice. For example, suppose that a patient consults a physician about an infection in his leg. Also, the physician negligently misdiagnoses the infection and reduces the patient's probability of not having his leg amputated from $p_{0}$ to $p_{1}$ $(0 < p_{1} < p_{0})$. Since the patient might not have his leg amputated despite the tortious action, this situation is different from the one discussed in the rule of proportional damage.

The valuation of damages in the patient's case is more controversial than in the type of case described by the rule of proportional damage. \citet{Noah2005} identifies that at least three formulas have been used to value the lost chance in similar situations\footnote{\citet{Noah2005} presents the cases \citet{Herskovits1983}, \citet{Fulton2002} and \citet{Falcon1990}. Also, \citet{Rhee2013} discusses the same formulas and their application in \citet{Matsuyama2008} and \citet{McKellips1987}.}: $(p_{0}-p_{1}) \cdot \Delta v$, $\frac{p_{0}-p_{1}}{p_{0}} \cdot \Delta v$ and $\frac{p_{0}-p_{1}}{1-p_{1}} \cdot \Delta v$. In particular, Noah identifies that each of the formulas was used by one of the judges who decided \citet{Herskovits1983}.

The difference between the above formulas isn't merely a technical detail. For example, if a leg amputation caused a patient a damage of $\$100,000$ and a medical misdiagnosis reduced the probability of avoiding the amputation from $95\%$ to $90\%$, then the formulas would prescribe compensations of approximately \$5,000, \$5,250 and \$50,000. The tenfold increase from the lowest to the highest compensation value indicates that the formulas $(p_{0}-p_{1}) \cdot \Delta v$ and $\frac{p_{0}-p_{1}}{1-p_{1}} \cdot \Delta v$ should require substantially different justifications. Nevertheless, these justifications and the legal restrictions that guide them have rarely been discussed explicitly. As a result, it is hard to anticipate what compensation will be awarded in a lost chance case.

The lack of an explicit understanding of the legal restrictions that guide the valuation of lost chances also makes it hard to generalize the traditional formulas to more complex situations. For example, how would the formula's change if the victim had more than two possible outcomes? How would the formulas change if the victim had an active role and could make a choice that would influence his outcome? Both of these considerations are relevant in \citet{Matos2005}, a Brazilian case in which the victim participated in a game show. The participant could decide whether to address the question that was posed, and if she decided to answer, her answer may or may not have been correct. Therefore, there were three possible outcomes.

Here, we discuss the legal restrictions that apply to the valuation of lost chances. We start from an abstract case description that is sufficiently general to encompass \citet{Matos2005}. Next, we use this description to present conceptual questions that we believe are relevant to value a lost chance. By exploring alternative answers to these questions, we find justifications for each of the compensation formulas that have been used in medical misdiagnosis cases; each of these formulas is shown to be equivalent to combinations of qualitative answers to the conceptual questions. We also show how these qualitative answers can be used to adapt the formulas presented in medical misdiagnosis to more general situations. The rules for compensation obtained in these situations resemble the ``expected-value approach'' that is described in \citet{King1981}[p.1384].

Our approach to valuation is based on obtaining indemnification, that is, undoing the damage that was caused by the tortious action. There exist valid reasons why a compensation valuation should provide more than indemnity. For example, compensation can be used to deter or punish tortious actions. Also, indemnity disregards the risk and costs of taking a case to court. One might argue that these factors should be included in the compensation. Nevertheless, we study compensation valuations that strictly provide indemnity. We take this approach because we believe that, in order to determine compensation, the decision-maker can start from indemnity and then increase this value to obtain other goals.

In order to discuss indemnity, one must be able to value situations in which outcomes are uncertain. We perform these valuations using Subjective Expected Utility Theory (SEUT) \citep{Lindley1971}. SEUT expresses the value of situations with uncertain outcomes in terms of the value and the probability of each outcome. In the next Section, we discuss the elements that we require in order to apply SEUT to lost chance cases.

Afterwards, we discuss the valuation of lost chances that have many possible outcomes. In this setting, there are several possible translations of indemnity into the SEUT model. The choice between these translations raise three conceptual questions. First, how much factual information about the victim's outcome should be used to calculate compensation? Second, how should this information be used to determine the outcome had the tortious action not occurred? Third, how should the word ``indemnity'' be interpreted in the context of SEUT? By combining different answers to these questions, we obtain several possible compensation valuations. We illustrate these compensation valuations with abstract examples. In particular, we apply the compensation valuations to medical malpractice cases and contrast the result with the formulas discussed in \citet{Noah2005}.

Finally, we generalize the above abstract description and consider a case such that the victim could make choices that would affect his outcome. This type of case raises three new conceptual questions: At the time he made his choice, had the tortious action not occurred, how much information would the victim have available? In the previous scenario, how would the victim use the available information in order to make a choice? If the victim makes a wrongful choice, how is his compensation affected by his choice? The main effect of these questions is to establish which party is responsible for the loss derived from the victim making an ill advised choice after the tortious action occurs. We discuss possible presumptions on the victim's choice-making and relate these presumptions to compensation valuation. We illustrate the compensation valuations that we obtain with an application to \citet{Matos2005}.

\section{The elements of lost chance valuation}
\label{section:elements}

In this Section we discuss the elements that we require to value a lost chance. Specifically, we assume that all the elements that are presented in Description \ref{description:no_choice} have been established in Court.

\begin{description_}
 \label{description:no_choice}
 The trier of fact determines that, first, the tortfeasor committed a tortious action and, second, among all legally relevant possibilities in a set $\mathcal{O}$, an outcome ``o'' occurred. This is the factual scenario. Furthermore, in the counterfactual scenario, had the tortious action not occurred, the outcomes in $\mathcal{O}$ would occur with different probabilities. The trier of fact assigns a value $V(o) \in \mathbb{R}$ to each outcome $o \in \mathcal{O}$, assigns for each value $v \in \mathbb{R}$ the amount of money $M(v) \in \mathbb{R}$ that has value $v$ and decides on what information, $\mathcal{K}$, about the factual outcome can be used to calculate compensation.
\end{description_}

The symbol $\mathcal{O}$ stands for the set of all the relevant outcomes that the victim may experience. An outcome in $\mathcal{O}$ may occur in the factual scenario, in the counterfactual scenario or in both. These outcomes must be mutually exclusive and exhaustive. That is, one and only one of the outcomes can occur in each scenario. For example, in some medical malpractice cases the victim either dies or doesn't die. In this example, $\mathcal{O} = \{\text{victim dies}, \text{victim doesn't die}\}$. Requiring the outcomes to be mutually exclusive isn't a strong restriction. For example, if one considers that there are two relevant outcomes that can both occur, $A$ and $B$, one can construct $\mathcal{O}$ by taking the combinations of occurrences that involve $A$ and $B$, that is, $\mathcal{O}=\{\text{A and B},\text{A and not B},\text{not A and B},\text{not A and not B}\}$. By contrast, the requirement that the list be exhaustive often requires careful consideration. Although a list can be made exhaustive by adding an extra outcome that stands for all possibilities that previously weren't in the list, this move usually isn't useful. The new outcome usually is abstract and hard to value. For example, the outcome ``victim doesn't die'' is harder to value than multiple outcomes of the type ``victim survives for $t$ years with quality of life $q$''. Since SEUT requires a value for each outcome, the latter more concrete outcomes are preferable over the abstract former ones.

We also specified that $\mathcal{O}$ should contain only outcomes that are legally relevant for the case under consideration. This requirement indicates that $\mathcal{O}$ reflects a legal construction that isn't necessarily equivalent to a scientific account of the possible outcomes. For example, \citet{Jansen1999} presents an hypothetical case in which a person is robbed before going to a casino. Even though the victim would probably lose the robbed money in the casino, this is not legally relevant. In this case, the legally relevant outcome is that the victim was robbed of a given amount of money. Since this loss is a recoverable harm, it limits further questions about what might have happened to the money. Although this hypothetical case involves no uncertainty, similar considerations can occur in lost chance cases. For example, consider a medical malpractice followed by the death of a victim who had a high risk of committing suicide. Similarly to the robbery case, if the lost chance of survival is a recoverable damage, then it can limit the considerations of whether the victim would have committed suicide.

Besides determining $\mathcal{O}$, in Description \ref{description:no_choice} the trier of fact also specifies a value, $V$, for each outcome. Since the value of each outcome varies from person to person, it is necessary to state whose valuation is represented by $V$. For an example of a lost chance case that illustrates the impact of valuating outcomes from different perspectives, we refer the reader to \citet{Aumann2003}. In the discussion of this example, Michael Keren argues that the valuation perspective should fit the intended effect of compensation: ``If punishment is `retribution', then the yardstick should be the effect on the victim; if `deterrent', then it should be the effect on the perpetrator'' \citep{Aumann2003}[p.237]. Since we focus on finding a compensation that indemnifies the victim, $V$ should be evaluated from the victim's perspective. For each possible outcome, $o$, the trier of fact provides a value for $o$, $V(o)$, by considering himself in the victim's place.

The calculations we perform in this paper require $V$ to be an utility function. Qualitatively, the higher the value of $V(o)$, the more $o$ is desirable. In this sense, $V(o)$ resembles a monetary valuation of $o$. The main difference between monetary value and utility is that, while the same increase in money can bring different values\footnote{For example, \citet{Bernoulli1954}[pp.24] argues that ``\ldots the determination of the value of an item must not be based on its price, but rather on the utility it yields. The price of the item is dependent only on the thing itself and is equal for everyone; the utility, however, is dependent on the particular circumstances of the person making the estimate''. In order to illustrate this difference, Bernoulli presents a gamble in which a person earns either a large sum or money with probability 0.5 or nothing. Prospective buyers of this gamble would be willing to pay different prices according to their aversion to risk. Although the price to participate in this gamble might be the same for all prospective buyers, their utilities for the gamble are different.}, the same increase in utility always brings the same gain. To the best of our knowledge, this distinction usually isn't made apparent in legal decisions. Most decisions use monetary values as a basis for calculating compensation. By using this methodology, one cannot incorporate a common phenomenon such as the tendency to avoid risks and, therefore, compensation values won't necessarily provide indemnification.

Although $V$ isn't a monetary valuation, the compensation value determined by the court must be expressed in terms of money. Given this constraint, our calculations also require the trier of fact to establish, for each value $v \in \mathbb{R}$, the amount of money that the trier of fact believes would bring an increase in value of $v$ to the victim. This element is presented in Description \ref{description:no_choice} by the function $M(v)$.

One can inspect the difference between $V(o)$ and $M(v)$ by contrasting the way in which each is used to determine compensation. Initially, $V(o)$ is used to determine the value of the lost chance in terms of the value of each possible outcome. For example, if the victim loses the chance to avoid pain and suffering, then $V(o)$ represents the victim's preferences over his emotional and physiological responses in each outcome in $\mathcal{O}$. The judge determines the value of the lost chance in terms of the value of each outcome. Only at a later stage would $M(v)$ be used to determine a monetary compensation in terms of the value of the lost chance.

The next element in Description \ref{description:no_choice} is the trier of fact's representation of uncertainty through probabilities. These probabilities reflect the trier of fact's uncertainty at a moment before the tortious action occurred. The probability model for Description \ref{description:no_choice} can be presented using an influence diagram\footnote{Formally, the influence diagram presents a twin-network model \citep{Balke1994}.} such as in Figure \ref{figure:description_no_choice}. The random quantities $O_{0}$ and $O_{1}$ stand for the outcomes that occur in the factual and counterfactual scenarios. Similarly, $V_{0}$ and $V_{1}$ stand for the values attributed to the outcome in the counterfactual and factual scenarios; $V_{0} = V(O_{0})$ and $V_{1} = V(O_{1})$. Finally, $F$ stands for the factors that connect the outcomes in each scenario. The marginal probabilities associated to $O_{0}$ and $O_{1}$ are often presented by expert witnesses in Court. Example \ref{example:medical_malpractice} uses a typical medical malpractice case to illustrate the model in Figure \ref{figure:description_no_choice}. For more uses of probability models in lost chance cases, we refer the reader to \citet{Miller2005, Pan2013}.

\begin{figure}[ht]
 \begin{align*}
  \xymatrix{
    *+[Fo]{O_{0}} \ar[d]	& *+[Fo]{F} \ar[l] \ar[r] 	& *+[Fo]{O_{1}} \ar[d] \\
    *+[F-]{V_{0}}		&				& *+[F-]{V_{1}}
  }
 \end{align*}
 \caption{Influence diagram for Description \ref{description:no_choice}. }
 \label{figure:description_no_choice}
\end{figure}
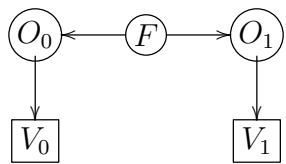

\begin{example}[typical medical malpractice case]
 \label{example:medical_malpractice}
 The victim can obtain either a bad outcome, $o_{b}$, or a good outcome, $o_{g}$. The difference in value between these outcomes is $\Delta v = V(o_{g})-V(o_{b}) > 0$. By performing an act of medical malpractice, the physician reduces the probability of the good outcome from $P(O_{0}=o_{g}) = p_{0}$ to $P(O_{1}=o_{g}) = p_{1}$. It is commonly established that, if the bad outcome were to occur had there been no medical misdiagnosis, then it would also occur in the event of the medical misdiagnosis. This relationship between each scenario is expressed through the influence of $F$ on $O_{0}$ and on $O_{1}$. For example, one can model this relation by assuming $F \sim \text{Uniform}(0,1)$, $O_{0}=o_{g}$ if and only if $F < p_{0}$ and $O_{1}=o_{g}$ if and only if $F < p_{1}$. 
\end{example}

The connection between the factual and counterfactual scenarios in Example \ref{example:medical_malpractice} is based on the assumption that appropriate diagnosis would never worsen the outcome obtained in the factual scenario. Evidence doesn't always corroborate this assumption. Examples \ref{example:factual-independence} and \ref{example:factual-dependence} present two similar situations that stress the importance of $F$, the connection between the factual and the counterfactual scenarios.

\begin{example}
 \label{example:factual-independence}
 The tortfeasor has two boxes that have red and blue balls. The relative frequencies of balls in boxes 1 and 2 are, respectively, $p_{0}$ and $p_{1}$ blue balls and $1-p_{0}$ and $1-p_{1}$ red balls, $p_{1} < p_{0}$. The tortfeasor had the obligation to randomly select a ball from box 1. The tortfeasor would then have to reward the victim according to the color of the ball that was drawn using the following scheme: V(red)=$v_{\text{r}}$, V(blue)= $v_{\text{b}}, v_{\text{b}}-v_{\text{r}}=\Delta v > 0$. Nevertheless, the tortfeasor secretly draws a ball from box 2, the ball that is drawn is red and the tortfeasor pays $v_{r}$ to the victim. There is evidence that a draw from box 1 would be independent from a draw from box 2.
\end{example}

\begin{example}
 \label{example:factual-dependence}
 The tortfeasor had the same obligation as in Example \ref{example:factual-independence}. Nevertheless, in this case, before a ball is drawn, the tortfeasor secretly paints red some of the blue balls in box 1. As a consequence, box 1's new relative frequencies were $p_{1}$ blue balls and $1-p_{1}$ red balls. Next, the tortfeasor removes a ball from box 1, the ball that is drawn is red and the tortfeasor pays $v_{r}$ to the victim.
\end{example}

From a perspective based on evidence, the connection between the factual and counterfactual scenarios is different in Examples \ref{example:factual-independence} and \ref{example:factual-dependence}. In Example \ref{example:factual-independence}, since the draws from each box are independent, the knowledge that a red ball was drawn from box 2 brings no information about which ball would have been drawn from box 1. Consequently, a blue draw in the factual scenario can occur jointly with a red draw in the counterfactual scenario. By contrast, in Example \ref{example:factual-dependence} the factual outcome brings information about the counterfactual outcome. No ball in box 1 was painted blue and, consequently, upon observing a blue ball, one knows for sure that a blue ball also would have been drawn in the counterfactual scenario. 

Despite the aforementioned difference between Examples \ref{example:factual-independence} and \ref{example:factual-dependence}, another line of inquiry approximates these scenarios. In both of them, the obligation of drawing a ball from a box with a relative frequency of $p_{0}$ blue balls was replaced by a draw from a box with a relative frequency of $p_{1}$ blue balls. Furthermore, even in Example \ref{example:factual-independence} one can associate the blue ball in the factual scenario to the blue ball in the counterfactual scenario. This type of inquiry questions whether there is any reason to compensate the victim differently in these examples. In other words, similarly to the legal construction of causality, the relation between the factual and counterfactual scenarios can be a legal construction that diverges from the scientific perspective. For example, one might adopt the connection that brings the outcomes in the factual scenario the closest to the outcomes in the counterfactual scenario. In Example \ref{example:factual-independence}, this type of reasoning is a possible justification for approximating a blue ball drawn from box 1 and a blue ball drawn from box 2, although evidence shows these draws are independent. In the next Section we discuss in more detail the differences in compensation that arise by adopting different connections between the factual and counterfactual scenarios. 

The last element of compensation presented in Description \ref{description:no_choice} is $\mathcal{K}$, the information about the factual outcome that can be used to determine compensation\footnote{Formally, the random variables $F$,$O_{0}$,$O_{1}$,$V_{0}$ and $V_{1}$ are defined in a probability space $(\Omega,\mathcal{F},\mathbb{P})$. We use the symbol $\mathcal{K}$ to denote a sub-$\sigma$-field of $\mathcal{F}$. In order to define this sub-$\sigma$-field, the $\sigma$ notation is often useful. For each $\mathcal{F}_{0} \subset \mathcal{F}$, we denote by $\sigma(\mathcal{F}_{0})$ the smallest $\sigma$-field that contains $\mathcal{F}_{0}$. Similarly, for every $\mathcal{F}$-measurable random variable, $Z$, we define $\sigma(Z) = \sigma(\{Z^{-1}[(a,\infty)]: a \in \mathbb{R}\})$.}. Given the uncertainty that exists in lost chance cases, the trier of fact might have more information about the factual outcome at the time of litigation than the tortfeasor had available at the time of the tortious action. Due to this asymmetry, during litigation the trier of fact might be certain about a given factual outcome that was very unlikely and couldn't reasonably be expected at the time the tortious action occurred. This type of situation leads \citet{Fisher1990} to question whether the trier of fact can use the full extent of the available factual information in order to determine compensation. The choice of $\mathcal{K}$ allows one to take this type of consideration into account and restrict the use of factual information. If two outcomes differ only in respect to inadmissible factual information according to $\mathcal{K}$, then their respective compensations should be equal. 

Although Description \ref{description:no_choice} assumes that the factual outcome is known at the time of trial, a change on $\mathcal{K}$ can be used to extend Description \ref{description:no_choice} to cases in which only partial information about the factual outcome is known at that time. In such cases, $\mathcal{K}$ is determined both by the legal restrictions and the evidence that is available\footnote{Formally, consider that $O_{1}^{t}$ represents the outcomes that are known at the time of the trial and that $O_{1}$ represent all the legally relevant outcomes that the victim will experience. In cases such that not every factual outcome is known at the time of trial, $\mathcal{K} \subset \sigma(O_{1}^{t}) \subset \sigma(O_{1})$.}. We believe that such an extension is of practical importance in some cases. For example, consider a medical misdiagnosis case such that the victim has an increased risk of suffering many different symptoms throughout the rest of his life. Since law suits are expensive, it is usually unreasonable to expect that the victim will file more than a single law suit. Also, in case the victim files a single law suit, then in case the extension to Description \ref{description:no_choice} isn't available, the victim won't be compensated for the symptoms that occur after the law suit is resolved. Therefore, in these cases, Description \ref{description:no_choice} would be a practical limitation to the victim's right to be fully compensated. The anticipation of future damages that we suggest by extending Description \ref{description:no_choice} avoids this practical limitation by allowing the victim to be fully compensated in a single law suit\footnote{Here, we disagree with the statement that ``where the defendant's tortious conduct created a risk of future consequences, the operation of the loss-of-a-chance doctrine should be suspended until the harmful effects actually materialize'' \citep{King1998}[p.496].}.

In the next Section we use the above elements to discuss the valuation of lost chances with many possible outcomes. We discuss several interpretations of how compensation should be determined. These interpretations are answers to conceptual questions such as: How much factual information can be used in order to determine compensation? For each set of interpretations, we obtain a compensation valuation in terms of the elements described in this Section. We contrast these compensation valuations by applying them to cases such as the medical malpractice case described in Example \ref{example:medical_malpractice}. For this example, we also contrast the compensation valuations to the formulas for compensation discussed in \citet{Noah2005}.

\section{Valuation of lost chances with many possible outcomes}
\label{section:no_choice}

In order to derive the compensation in Description \ref{description:no_choice}, we first translate legal restrictions on compensation into the model described in the previous Section. The legal restrictions we discuss in this Section are varying answers to the following conceptual questions:

\begin{enumerate}
 \item How much information about the factual outcome should be used to determine compensation?
 \item Is the connection between the factual scenario and the counterfactual scenario a matter of evidence or law?
 \item How should the expression ``undo the damage'' be interpreted in the context of lost chances?
\end{enumerate}

Once the above questions are discussed, we use the probability model and the legal restrictions on compensation to obtain rules for valuating lost chances.

In order to translate legal restrictions on compensation into the probabilistic model described in Section \ref{section:elements}, we first translate the word compensation. This translation is presented in Definition \ref{def:compensation}. 

\begin{definition}[compensation valuation]
 \label{def:compensation}
 A compensation valuation, $X$, is a function of the factual outcome that determines the value of the compensation that should be awarded. The compensation valuation is non-negative, since the legal decision cannot fine the victim. Also, the compensation valuation can use only the information allowed by $\mathcal{K}$\footnote{Formally, $\mathcal{K} \subset \sigma(O_{1})$ and $X$ is a non-negative $\mathcal{K}$-measurable random variable.}. We use the letter $\mathcal{X}$ to denote the set of all compensation valuations. The amount of money awarded to the victim is that amount which increases the victim's value by $X$\footnote{Formally, the victim should be awarded $M(V_{1}+X)-M(V_{1})$.}.
\end{definition}

The first legal restriction we discuss regards to how much information about the factual outcome can be used in order to determine compensation. This restriction is translated into the model in Section \ref{section:elements} as a determination of $\mathcal{K}$. On the one hand, one can argue that the use of factual information increases the uncertainty about the compensation value and, therefore, can make the tortfeasor owe values that are much higher than the one he would expect to owe at the moment he committed the tortious action. On the other hand, one can also argue that the use of factual information allows the victim to be compensated proportionally to the damage he effectively suffered. In the following, we present three possible restrictions that allow varying uses of factual information. 

\begin{restriction_}[L-FI]
  \label{restriction:low_information}
  Low Factual Information. No information about the factual outcome can be used\footnote{Formally, $\mathcal{K}=\sigma(\emptyset)$.}. That is, the victim receives the same compensation no matter what factual outcome occurred.
\end{restriction_}

Restriction \nameref{restriction:low_information} is the most extreme restriction among the possible uses of factual information. Since no factual information is used to determine compensation, at the moment of the occurrence of the tortious action the tortfeasor can determine exactly what compensation he will be required to pay in the future. Although Restriction \nameref{restriction:low_information} is defended in the case discussed in \citet{Fisher1990}, it has rarely been observed in lost chance cases. For example, consider the typical case of medical malpractice in Example \ref{example:medical_malpractice}. Restriction \nameref{restriction:low_information} determines that the compensation is always the same, no matter if the bad outcome occurs or not. On the contrary, the most common approach is to allow compensation only when the bad outcome occurs.

Other Restrictions compensate only the victims that obtain outcomes that are worse than the one they were expected to obtain in the counterfactual scenario. For example, in Example \ref{example:medical_malpractice}, only the victims that obtained the bad outcome would be compensated. This separation between factual outcomes is found in Definition \ref{def:outcome_separation}.

\begin{definition}[Selective compensation groups]
  \label{def:outcome_separation}
  We denote by $\mathcal{O}^{-}$ and $\mathcal{O}^{+}$, the sets of outcomes that are, respectively, more and less valuable than the corresponding expected outcome in the counterfactual scenario\footnote{Formally, $\mathcal{O}^{-} = \{o \in \mathcal{O}: E[V_{0}|O_{1}] < E[V_{1}|O_{1}]\}$ and $\mathcal{O}^{+} = \mathcal{O} - \mathcal{O}^{-}$.}.
\end{definition}

Restrictions \nameref{restriction:medium_information} and \nameref{restriction:high_information} present opposite extremes on the use of factual information that allows the groups in Definition \ref{def:outcome_separation} to be compensated differently.

\begin{restriction_}[M-FI]
  \label{restriction:medium_information}
  Medium Factual Information. The only information about the factual outcome that can be used is whether it is more or less valuable than the corresponding expected counterfactual outcome\footnote{Formally, $\mathcal{K} = \sigma(\{\mathcal{O}^{+}\})$.}. That is, while performing compensation valuation, one can use only the factual information of whether the outcome lies in $\mathcal{O}^{-}$ or $\mathcal{O}^{+}$. The compensation valuation must be the same for every outcome in $\mathcal{O}^{-}$ and also for every outcome in $\mathcal{O}^{+}$.
\end{restriction_}

\begin{restriction_}[H-FI]
  \label{restriction:high_information}
  High Factual Information. All the available information about the factual outcome can be used in the compensation valuation\footnote{Formally, $\mathcal{K}=\sigma(O_{1})$.}. That is, every factual outcome can lead to a different compensation (although not necessarily).
\end{restriction_}

While, Restriction \nameref{restriction:medium_information} uses the minimum factual information that is necessary so that outcomes in $\mathcal{O}^{-}$ and in $\mathcal{O}^{+}$ obtain different compensations, Restriction \nameref{restriction:high_information} allows one to use the complete factual information in order to determine the compensation. Hence, at the moment in which the tortious action occurs and considering all other factors fixed, Restrictions from \nameref{restriction:low_information} to \nameref{restriction:high_information} gradually increase both the uncertainty (at the time of the event) and the precision (at the time of the adjudication) of the future compensation.

Besides the use of factual information, Section \ref{section:elements} also doesn't completely specify what should be the connection between the factual and the counterfactual scenarios. Although we didn't find any previous discussion of this matter, we believe Restrictions \nameref{restriction:connection_evidence}, \nameref{restriction:connection_legal} and \nameref{restriction:connection_legal_2} provide reasonable ways to determine this connection.

\begin{restriction_}[E-C]
 \label{restriction:connection_evidence}
 Evidence Connection. The connection between the scenarios in Description \ref{description:no_choice} should be determined by evidence.
\end{restriction_}

\begin{comment}
 The translation of these words into a quantitative model often depends on the context. For example, consider two points in a given city, $A$ and $B$. In one context, one might say that $A$ and $B$ are close if the euclidean distance between them is small. In another context, $A$ and $B$ might be close when there exists a short road path that connects them. $A$ and $B$ might even be close because, on average, it takes a small amount of time to get from one point to another. In this Section, we say that two unknown quantities are close if, on average, the square of the difference between them is small. In other words, we use the average square difference as a measure of distance between unknown quantities - the $\mathcal{L}^{2}_{p}$ distance. There are several reasons why this translation of the word ``distance'' can be appropriate in the quantification of lost chances. For example, the $\mathcal{L}^{2}_{p}$ builds upon the euclidean distance for known quantities. Consequently, if two unknown quantities always have a difference of $5$ between them, then they are closer in $\mathcal{L}^{2}_{p}$ than two quantities that with equal probability either have a difference of $0$ or $10$. In other words, this distance can be used to convey the idea that it is preferable to have frequent small errors $(5)$ than unfrequent high errors $(10)$. Also, the squared distances are averaged over the probability of each outcome so that more probable outcomes have a larger influence on the $\mathcal{L}^{2}_{p}$ distance than unlikely outcomes.
\end{comment}

\begin{restriction_}[LD-C]
 \label{restriction:connection_legal}
 Least Divergence Connection. The connection between the scenarios in Description \ref{description:no_choice} should be determined by law and should be chosen as the one that brings least divergence\footnote{Several restrictions in this Section use words such as ``divergence'', ``closeness'' and ``distance''. Here, and throughout, we use expected squared difference (euclidean distance) as a measure of distance between random quantities.} between the values obtained by the victim in each scenario\footnote{Formally, let $\mathcal{P}$ denote the set of all the joint distributions for $(F,O_{0},O_{1})$ that satisfy the conditions specified in Description \ref{description:no_choice} and in Figure \ref{figure:description_no_choice}. The joint distribution for $(F,O_{0},O_{1})$ should be taken according to the solution to $$\arg \min_{\mathbb{P} \in \mathcal{P}}{E_{\mathbb{P}}[(V_{0}-V_{1})^{2}]}.$$ Finding the minimum of the above equation is known as the Monge-Kantorovich mass transfer problem \citep{Kantorovich1960}.}. 
\end{restriction_}

\begin{restriction_}[I-C]
 \label{restriction:connection_legal_2}
 Independence Connection. The connection between the scenarios in Description \ref{description:no_choice} should be determined by law and should be chosen as the one that makes the scenarios independent\footnote{Formally, $\sigma(F) = \{\emptyset, \Omega\}$. That is, the probability model in Figure \ref{figure:description_no_choice} can be simplified to the following figure:
  \begin{align*}
   \xymatrix{
    *+[Fo]{O_{0}} \ar[d]	& *+[Fo]{O_{1}} \ar[d] \\
    *+[F-]{V_{0}}		& *+[F-]{V_{1}}
  }
 \end{align*}}.
\end{restriction_}

Examples \ref{example:factual-independence} and \ref{example:factual-dependence} can be used to contrast Restrictions \nameref{restriction:connection_evidence}, \nameref{restriction:connection_legal} and \nameref{restriction:connection_legal_2}. If one constructs the connection between scenarios based on evidence, then compensation in these two examples can be different. Evidentially, while in Example \ref{example:factual-independence} a blue ball draw in the factual scenario brings no information about the counterfactual draw, in Example \ref{example:factual-dependence} it brings certainty about this draw . As possible alternatives, Restrictions \nameref{restriction:connection_legal} and \nameref{restriction:connection_legal_2} treat Examples \ref{example:factual-independence} and \ref{example:factual-dependence} equivalently. In Example \ref{example:factual-independence}, Restriction \nameref{restriction:connection_legal} compares a blue draw from box 1 to a blue draw from box 2. That is, Restriction \nameref{restriction:connection_legal} generates a certain correspondence in both examples between a blue factual draw and a blue counterfactual draw. On the contrary, in Example \ref{example:factual-dependence}, Restriction \nameref{restriction:connection_legal_2} ignores the connection given by evidence between the factual and counterfactual scenarios. In summary, Restriction \nameref{restriction:connection_evidence} distinguishes the compensation in Examples \ref{example:factual-independence} and \ref{example:factual-dependence}. Restriction \nameref{restriction:connection_legal_2} models both examples as Restriction \nameref{restriction:connection_evidence} models Example \ref{example:factual-independence}, while Restriction \nameref{restriction:connection_legal} models both as Restriction \nameref{restriction:connection_evidence} models Example \ref{example:factual-dependence}.

The last conceptual question raised in the beginning in this Section was about the legal interpretation of indemnity in lost chance cases. One idea associated to indemnity is that it should ``undo'' the damage caused to the victim. Another idea that is associated to indemnity is that the factual scenario with compensation should have the same value as the counterfactual scenario without compensation.  Restrictions \nameref{def:indemnity_1} and \nameref{def:indemnity_2} stress the difference between these concepts in the context of the compensation for a lost chance.

\begin{restriction_}[CC-I]
 \label{def:indemnity_1}
 Closest to Counterfactual Indemnity. The compensation valuation should be the function in Definition \ref{def:compensation} that brings the victim the closest to the counterfactual scenario\footnote{Formally, $X = \arg\min_{\{X \in \mathcal{X}\}}{E[((V_{0}-V_{1})-X)^{2}]}$.}.
\end{restriction_}

\begin{theorem}
 \label{theorem:indemnity_1}
 Assume that there exists a compensation valuation, $X$, such that $E[(V_{0}-(V_{1}+X))^{2}] < \infty$ and that $E[|V_{0}-V_{1}|] < \infty$. There exists a unique\footnote{Formally, we say that there exists a unique random variable that satisfies a restriction if, for every random variables $X$ and $Y$ that satisfy the restriction, $P(X=Y)=1$.} compensation valuation, $X^{*}$, that satisfies Restriction \nameref{def:indemnity_1} and
 \begin{align*}
   X^{*}	&= \max(0,E[V_{0}-V_{1}|\mathcal{K}])
 \end{align*}
\end{theorem}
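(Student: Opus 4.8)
The plan is to reduce the constrained minimization in Restriction \nameref{def:indemnity_1} to an unconstrained $L^{2}$-projection followed by a pointwise truncation. Throughout, write $D = V_{0}-V_{1}$ and $m = E[D\mid\mathcal{K}]$; the latter is well defined because $E[|D|]<\infty$ by hypothesis. The minimization of $E[(D-X)^{2}]$ runs over the non-negative $\mathcal{K}$-measurable $X$ for which the objective is finite, and this set is non-empty by the first hypothesis.

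First I would establish the key integrability fact that $D-m$ is square-integrable, even though neither $D$ nor $m$ need be. Let $X_{0}$ be the compensation valuation supplied by the hypothesis, so $X_{0}\ge 0$ is $\mathcal{K}$-measurable and $E[(D-X_{0})^{2}]<\infty$. Since $X_{0}$ is $\mathcal{K}$-measurable, $m - X_{0} = E[D-X_{0}\mid\mathcal{K}]$, and the contraction property of conditional expectation on $L^{2}$ (conditional Jensen applied to $y\mapsto y^{2}$) gives $E[(m-X_{0})^{2}]\le E[(D-X_{0})^{2}]<\infty$. Writing $D - m = (D-X_{0})-(m-X_{0})$ as a difference of two square-integrable variables then yields $E[(D-m)^{2}]<\infty$.

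Next I would prove the orthogonal decomposition. For any admissible $X$ the identity $m-X = (D-X)-(D-m)$ exhibits $m-X$ as a difference of square-integrable variables, so $m-X$ is square-integrable; expanding the square gives
\[
 E[(D-X)^{2}] = E[(D-m)^{2}] + 2\,E[(D-m)(m-X)] + E[(m-X)^{2}].
\]
The cross term vanishes: $(D-m)(m-X)$ is integrable by Cauchy--Schwarz, $m-X$ is $\mathcal{K}$-measurable, and $E[D-m\mid\mathcal{K}]=0$, so conditioning on $\mathcal{K}$ and applying the tower property kills it. Hence $E[(D-X)^{2}] = E[(D-m)^{2}] + E[(m-X)^{2}]$, and since the first summand does not depend on $X$, minimizing the objective is equivalent to minimizing $E[(m-X)^{2}]$ over non-negative $\mathcal{K}$-measurable $X$.

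Finally I would solve this reduced problem pointwise. For fixed $\omega$ the map $x\mapsto (m(\omega)-x)^{2}$ is strictly convex on $[0,\infty)$ with unique minimizer $\max(0,m(\omega))$, so the pointwise-optimal choice is $X^{*}=\max(0,m)$, which is non-negative and $\mathcal{K}$-measurable. Admissibility of $X^{*}$ follows from the bound $\max(0,-m)\le |m-X_{0}|$, valid because $X_{0}\ge 0$: it shows $E[(m-X^{*})^{2}]<\infty$ and hence, via the decomposition above, that the objective is finite at $X^{*}$. Uniqueness comes from strict convexity: for any competitor $X'$ the integrand $(m-X')^{2}-(m-X^{*})^{2}$ is non-negative pointwise, so equality of the two objectives forces it to vanish almost surely, whence $X'=X^{*}$ almost surely. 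I expect the main obstacle to be the first step — securing $E[(D-m)^{2}]<\infty$ from the weak hypotheses (only $D$ integrable together with a single finite-objective valuation), since the clean orthogonality argument silently requires $D-m$ to live in $L^{2}$; once that is in place the remaining steps are routine.
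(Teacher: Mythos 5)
Your proof is correct and follows essentially the same route as the paper: project $V_{0}-V_{1}$ onto the $\mathcal{K}$-measurable variables via the orthogonal decomposition $E[(D-X)^{2}]=E[(D-m)^{2}]+E[(m-X)^{2}]$ (the paper's Lemma~\ref{lemma:l2_trick}) and then truncate the projection at zero. If anything you are more careful than the paper about the integrability bookkeeping --- deriving $E[(D-m)^{2}]<\infty$ from conditional Jensen and checking admissibility of $\max(0,m)$ via $\max(0,-m)\le|m-X_{0}|$ --- while your pointwise convexity argument for optimality and uniqueness is an equivalent, slightly cleaner packaging of the paper's algebraic split over $\mathcal{O}^{+}$ and $\mathcal{O}^{-}$ that yields $R(X)-R(X^{*})\ge E[(X-X^{*})^{2}]$.
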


The proof of Theorem \ref{theorem:indemnity_1} is in the Appendix.

\begin{restriction_}[FM-I]
 \label{def:indemnity_2}
 Fixed Mean Indemnity.  Before the tortious actions occurs, the victim should be indifferent between a situation in which the tortious action doesn't occur and a situation in which it occurs and the victim receives compensation. The compensation valuation should be the function in Definition \ref{def:compensation} that satisfies the above restriction and that brings the victim the closest to the counterfactual scenario\footnote{Formally, $X = \arg\min_{\{X \in \mathcal{X}: E[X]=E[V_{0}-V_{1}] \text{ or } X=0\}}{E[((V_{0}-V_{1})-X)^{2}]}$.}.
\end{restriction_}

\begin{theorem}
 \label{theorem:indemnity_2}
 Assume that there exists a compensation valuation, $X$, such that $E[(V_{0}-(V_{1}+X))^{2}] < \infty$, $E[X]=E[V_{0}-V_{1}]$ and that $E[|V_{0}-V_{1}|] < \infty$. There exists a unique compensation valuation, $X^{*}$, that satisfies Restriction \nameref{def:indemnity_2}. If $E[V_{0}-V_{1}] \leq 0$, then $X^{*}=0$. If $E[V_{0}-V_{1}] > 0$, then there exists a unique $\lambda^{*} \in \mathbb{R}^{+}$ that satisfies the equation $E[\max(0,E[V_{0}-V_{1}|\mathcal{K}]-\lambda^{*})] = E[V_{0}-V_{1}]$ and $X^{*} = \max(0,E[V_{0}-V_{1}|\mathcal{K}]-\lambda^{*})$.
\end{theorem}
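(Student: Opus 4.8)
The plan is to reduce the equality-constrained minimization in Restriction \nameref{def:indemnity_2} to the unconstrained problem already solved in Theorem \ref{theorem:indemnity_1}, by absorbing the Lagrange multiplier for the mean constraint into a constant shift of the target. Throughout I write $D = V_{0}-V_{1}$, $\mu = E[D]$ and $Y = E[D \mid \mathcal{K}]$; the hypothesis $E[|V_{0}-V_{1}|]<\infty$ guarantees that $Y$ is a well-defined integrable $\mathcal{K}$-measurable random variable, and the tower property gives $E[Y]=\mu$. I would first dispose of the degenerate branch. If $\mu<0$ there is no non-negative $X$ with $E[X]=\mu$, so the existence hypothesis forces $\mu\ge 0$; and if $\mu=0$ the constraint $E[X]=0$ together with $X\ge 0$ forces $X=0$ a.s. Hence whenever $\mu\le 0$ the feasible set collapses to $\{X=0\}$ and $X^{*}=0$ is the unique minimizer, giving the first conclusion.

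For $\mu>0$ I would introduce, for each $\lambda\in\mathbb{R}$, the shifted target $D-\lambda$ and record the algebraic identity, valid for every $X$ with $E[X]=\mu$,
\[
E[(D-X)^{2}]=E[((D-\lambda)-X)^{2}]-\lambda^{2}.
\]
Since on the constraint surface this differs from the original objective only by the additive constant $-\lambda^{2}$, minimizing the left-hand side over $\{E[X]=\mu\}$ is the same as minimizing $E[((D-\lambda)-X)^{2}]$. Applying Theorem \ref{theorem:indemnity_1} to the target $D-\lambda$ (its hypotheses hold, since $E[|D-\lambda|]<\infty$ and the witness $X$ from our hypothesis gives $E[((D-\lambda)-X)^{2}]\le 2E[(D-X)^{2}]+2\lambda^{2}<\infty$) shows the unconstrained minimizer over all compensation valuations is $\max(0,Y-\lambda)$. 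The key idea is then to choose $\lambda^{*}$ so that this unconstrained minimizer already lies on the constraint surface, i.e. so that $g(\lambda^{*})=\mu$, where $g(\lambda):=E[\max(0,Y-\lambda)]$; for such a $\lambda^{*}$ the displayed identity shows $\max(0,Y-\lambda^{*})$ beats every feasible $X$ with $E[X]=\mu$, and the uniqueness in Theorem \ref{theorem:indemnity_1} upgrades this to a unique minimizer on that surface.

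The main work, and the step I expect to be the crux, is the existence and uniqueness of $\lambda^{*}$ solving $g(\lambda)=\mu$. I would establish that $g$ is continuous (indeed $1$-Lipschitz, since $\lambda\mapsto(Y-\lambda)^{+}$ is pointwise $1$-Lipschitz and $Y\in\mathcal{L}^{1}$ permits dominated convergence), non-increasing, with $g(\lambda)\ge E[Y-\lambda]=\mu-\lambda\to+\infty$ as $\lambda\to-\infty$ and $g(\lambda)\to 0$ as $\lambda\to+\infty$; the intermediate value theorem then yields a root. Because $g(0)=E[Y^{+}]\ge E[Y]=\mu$ and $g$ is non-increasing, the root satisfies $\lambda^{*}\ge 0$. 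For uniqueness, any root obeys $g(\lambda^{*})=\mu>0$, which forces $P(Y>\lambda^{*})>0$; combined with the pointwise bound $g(\lambda)-g(\lambda')\ge(\lambda'-\lambda)\,P(Y>\lambda')$ for $\lambda<\lambda'$, a second root is impossible.

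Two loose ends would remain. First, to confirm $X^{*}=\max(0,Y-\lambda^{*})$ is a bona fide compensation valuation with finite objective: it is non-negative and $\mathcal{K}$-measurable, and Theorem \ref{theorem:indemnity_1} supplies finiteness of $E[((D-\lambda^{*})-X^{*})^{2}]$, whence $E[(D-X^{*})^{2}]<\infty$; and $E[X^{*}]=g(\lambda^{*})=\mu$ by construction. Second, I must rule out the other branch $X=0$: using $E[DX^{*}]=E[YX^{*}]$ and evaluating on $\{Y>\lambda^{*}\}$ and its complement gives $E[(D-X^{*})^{2}]-E[D^{2}]=E\!\left[\left((\lambda^{*})^{2}-Y^{2}\right)\mathbf{1}_{\{Y>\lambda^{*}\}}\right]$, which is strictly negative because $Y>\lambda^{*}\ge 0$ on that event and $P(Y>\lambda^{*})>0$ (and if $E[D^{2}]=\infty$ the inequality is trivial). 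Thus $X^{*}$ strictly beats $X=0$ and is the unique minimizer over the whole feasible set, completing the $\mu>0$ case.
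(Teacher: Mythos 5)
Your proof is correct, and it reaches the paper's conclusion by a genuinely different route. The paper establishes optimality of $X^{+}_{2}=\max(0,\Delta V_{\mathcal{K}}-\lambda^{*})$ head-on: after reducing the objective to $E[(\Delta V_{\mathcal{K}}-X)^{2}]$ via Lemma \ref{lemma:l2_trick}, it uses $E[X]=E[X^{+}_{2}]$ to pass to variances and then runs a bespoke covariance decomposition (Lemma \ref{lemma:variance_decomposition}), exploiting $X^{+}_{2}\cdot X^{-}_{2}=0$ and $X\cdot X^{-}_{2}\leq 0$ to conclude $R(X)-R(X^{+}_{2})\geq Var[X-X^{+}_{2}]$. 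You instead absorb the mean constraint into a constant shift of the target: on the surface $E[X]=E[V_{0}-V_{1}]$ the objectives with targets $D$ and $D-\lambda$ differ by the constant $\lambda^{2}$, so the constrained problem is solved by the unconstrained minimizer of the shifted problem --- already identified by Theorem \ref{theorem:indemnity_1} as $\max(0,Y-\lambda)$ --- once $\lambda$ is tuned so that this minimizer lands on the constraint surface. This buys a black-box reuse of Theorem \ref{theorem:indemnity_1} in place of the covariance computation and makes the Lagrangian structure of the formula transparent; the paper's version is self-contained but recomputes what amounts to the same projection. Two details of yours are actually tighter than the paper's: your uniqueness argument for $\lambda^{*}$, via the bound $g(\lambda)-g(\lambda')\geq(\lambda'-\lambda)P(Y>\lambda')$ together with $P(Y>\lambda^{*})>0$, addresses the fact that $g$ is in general only non-increasing, whereas Lemma \ref{lemma:unique_solution} simply asserts that $f$ is decreasing; and your comparison of $X^{*}$ against the always-feasible point $X=0$ keeps track of the shift correctly, whereas the corresponding display in the paper's proof writes $E[\Delta V_{\mathcal{K}}^{2}]=E[(X^{+}_{2}+X^{-}_{2})^{2}]$ even though $\Delta V_{\mathcal{K}}=X^{+}_{2}+X^{-}_{2}+\lambda^{*}$. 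Both arguments rightly treat $X=0$ as a separate case, since it is feasible without satisfying the mean constraint.
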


The proof of Theorem \ref{theorem:indemnity_2} is also in the Appendix.

Restrictions \nameref{def:indemnity_1} and \nameref{def:indemnity_2} lead to different compensations only when it is possible to obtain an outcome in the factual scenario with a larger value than the corresponding expected value in the counterfactual scenario. In case such an outcome occurred, a naive interpretation of ``undoing'' the tortious action would involve fining the victim. Restrictions \nameref{def:indemnity_1} and \nameref{def:indemnity_2} deal with the impossibility of this naive interpretation of ``undo''. Whenever the expected counterfactual outcome is better than the factual outcome, Restriction \nameref{def:indemnity_1} compensates the victim to bring him to the expected counterfactual outcome. In order obtain this effect, Restriction \nameref{def:indemnity_1} makes, at the time of the tortious action, the factual scenario with compensation to be preferable to the victim than the counterfactual scenario. Restriction \nameref{def:indemnity_2}, however, establishes that the factual scenario with compensation should have the same expected value as the counterfactual scenario. Hence, in order for some factual outcomes to be better than the corresponding expected counterfactual values, other factual outcomes (with compensation) must be worse. We use Example \ref{example:indemnity_difference} to illustrate the difference between Restrictions \nameref{def:indemnity_1} and \nameref{def:indemnity_2}.

\begin{example}
  \label{example:indemnity_difference}
  In the counterfactual scenario one out of $5$ prizes would be awarded to the victim with equal probability. Let $\mathcal{O} = \{a_{1},a_{2},a_{3},a_{4},a_{5}\}$ stand for the possible awards and $V(a_{1})=5,V(a_{2})=30,V(a_{3})=35,V(a_{4})=70,V(a_{5})=110$. Evidence establishes that the tortfeasor tampered with the probabilities of the awards in such a way that there is a deterministic relation between the factual and counterfactual scenarios that is given by Table \ref{table:indemnity_difference}.
\end{example}

\begin{table}[ht]
 \centering
 \begin{tabular}{|l|c|c|c|c|c|}
  \hline
  Variable & \multicolumn{5}{|c|}{Outcomes}											\\
  \hline
  $O_{0}$ 								& $a_{1}$ & $a_{2}$ & $a_{3}$ & $a_{4}$ & $a_{5}$	\\
  \hline
  $O_{1}$ (Restriction \nameref{restriction:connection_evidence})	& $a_{3}$ & $a_{3}$ & $a_{2}$ & $a_{1}$ & $a_{4}$	\\
  $O_{1}$ (Restriction \nameref{restriction:connection_legal})		& $a_{1}$ & $a_{2}$ & $a_{3}$ & $a_{4}$ & $a_{3}$	\\
  \hline
 \end{tabular}
 \caption{Possible outcomes in the counterfactual scenario $(O_{0})$ followed by their corresponding outcomes in the factual scenario $(O_{1})$.}
 \label{table:indemnity_difference}
\end{table}

If Restrictions \nameref{restriction:high_information} and \nameref{restriction:connection_evidence} are used, then there exists a difference between Restrictions \nameref{def:indemnity_1} and \nameref{def:indemnity_2} in Example \ref{example:indemnity_difference}. This difference occurs since the outcome $a_{3}$ in the factual scenario has a larger value than the corresponding expected value in the counterfactual scenario. Hence, if the outcome in the factual scenario is $a_{3}$, the victim obtains no compensation. Also, if $a_{1}$ occurs in the factual scenario, then one knows that $a_{4}$ would have happened in the counterfactual scenario. Hence, Restriction \nameref{def:indemnity_1} would award the victim $V(a_{4})-V(a_{1}) = 70-5 = 65$. However, Restriction \nameref{def:indemnity_2} would provide a smaller compensation value in order to balance out the fact that, when $a_{3}$ occurs, the victim obtains an outcome that is better than the expected counterfactual value. In order to obtain this effect, Restriction \nameref{def:indemnity_2} would award the victim a value of $50$ when $a_{1}$ occurs in the factual scenario. 

Table \ref{table:compensations} summarizes the compensation valuations obtained by combining different legal restrictions. Restrictions \nameref{restriction:connection_evidence}, \nameref{restriction:connection_legal} and \nameref{restriction:connection_legal_2} don't appear in the first row since they are implicitly taken into account by the formulas that are derived. In combination with either Restriction \nameref{def:indemnity_1} or \nameref{def:indemnity_2}, Restriction \nameref{restriction:medium_information} provides a larger compensation to outcomes in $\mathcal{O}^{+}$ than Restriction \nameref{restriction:low_information}. This occurs because Restriction \nameref{restriction:medium_information} allows the decision-maker not to compensate the victims that obtain a factual outcome better than the counterfactual outcome. As a consequence, the decision-maker can increase the compensation in the counterfactual outcomes in which the victim was disadvantaged by the tort. Similarly, by allowing complete information about the factual outcome, Restriction \nameref{restriction:high_information} allows compensation to be the closer than Restrictions \nameref{restriction:low_information} and \nameref{restriction:medium_information} to the trier of fact's current evaluation of damages. Finally, no matter whether Restrictions \nameref{restriction:low_information}, \nameref{restriction:medium_information} or \nameref{restriction:high_information} are used, the compensation valuation obtained with Restriction \nameref{def:indemnity_1} is always at least as high as the one obtained with Restriction \nameref{def:indemnity_2}.

\begin{table}
 \centering
 \begin{tabular}{|c|c|}
  \hline
  Restrictions											& Compensation Valuation										\\
  \hline
  \nameref{restriction:low_information} and (\nameref{def:indemnity_1} or \nameref{def:indemnity_2})	& $X = \max(0, E[V_{0}-V_{1}])$										\\[2mm]
  \nameref{restriction:medium_information} and \nameref{def:indemnity_1}				& $X = \mathcal{O}^{+} \cdot E[V_{0}-V_{1}|\mathcal{O}^{+}]$						\\[2mm]
  \nameref{restriction:medium_information} and \nameref{def:indemnity_2}				& $X = \mathcal{O}^{+} \cdot \frac{ \max(0,E[V_{0}-V_{1}])}{P(\mathcal{O}^{+})}$			\\[2mm]
  \nameref{restriction:high_information} and \nameref{def:indemnity_1}					& $X = \max(0, E[V_{0}-V_{1}|O_{1}])$									\\[2mm]
  \nameref{restriction:high_information} and \nameref{def:indemnity_2}					& $X = \max(0, E[V_{0}-V_{1}|O_{1}]-\lambda^{*})$ 							\\[1mm]
  \hline
 \end{tabular}
 \caption{Compensation valuations obtained by combining different legal restrictions}
 \label{table:compensations}
\end{table}

Despite the fact that the compensation valuations in Table \ref{table:compensations} might seem unfamiliar, they can be compared to two of the formulas\footnote{\citet{Noah2005} indicates that the formula $\frac{p_{0}-p_{1}}{p_{0}} \Delta v$ has also been used to calculate compensation. None of the compensation valuations we obtain in Table \ref{table:compensations} reduces to this formula in Example \ref{example:medical_malpractice}. As an intuitive argument against the formula $\frac{p_{0}-p_{1}}{p_{0}}\Delta v$, observe that it doesn't reduce to $p_{0} \Delta v$ when $p_{1}=0$. Instead, \citet{Noah2005} observes that, in this situation, $\frac{p_{0}-p_{1}}{p_{0}}\Delta v$ reduces to $\Delta v$. The author argues that this formula is unreasonable by pointing out that, if the chance of a better outcome were reduced from 2\% to 0\%, then the formula would prescribe the victim to receive the total difference between the values of the good and bad outcomes (instead of 2\% of this value, as prescribed by proportional damage).} presented in \citet{Noah2005} for medical malpractice cases (Example \ref{example:medical_malpractice}). Table \ref{table:compensation_malpractice_example} summarizes the compensation valuations for each possible combination of restrictions and outcome. From this Table one finds that, in medical malpractice cases, both the formulas $(p_{0}-p_{1})\Delta v$ and $\frac{p_{0}-p_{1}}{1-p_{1}}\Delta v$ can be justified according to the choice of Restrictions one uses. If Restrictions \nameref{restriction:connection_evidence} or \nameref{restriction:connection_legal} are used, then, while $(p_{0}-p_{1})\Delta v$ is justified when no factual information can be used to determine the compensation, $\frac{p_{0}-p_{1}}{1-p_{1}}\Delta v$ is obtained when this information can be used and one doesn't compensate the good outcomes. Notwithstanding the justification for both formulas, we find no justification for using the formula $(p_{0}-p_{1})\Delta v$ while compensating only the victims that obtain the bad outcome. Also, the formula obtained by combining Restrictions \nameref{restriction:connection_legal_2} and \nameref{def:indemnity_1} with the use of factual information is $p_{0} \Delta v$ and resembles the rule of proportional damage. In this case, the compensation is always the same no matter what is the reduction, $p_{0}-p_{1}$, in the probability of obtaining $o_{g}$, because Restriction \nameref{restriction:connection_legal_2} models the factual and counterfactual scenarios as independent.

\begin{table}
  \centering
  \begin{tabular}{|c|c|c|}
    \hline
																											& \multicolumn{2}{|c|}{Factual Outcome}																\\
    \hline
    Restrictions																									& $o_{b}$/red					& $o_{g}$/blue									\\
    \hline
    \nameref{restriction:low_information} and (\nameref{restriction:connection_evidence} or \nameref{restriction:connection_legal} or \nameref{restriction:connection_legal_2}) and (\nameref{def:indemnity_1} or \nameref{def:indemnity_2})	& $(p_{0}-p_{1})\Delta v$			& $(p_{0}-p_{1})\Delta v$	\\
    (\nameref{restriction:medium_information} or \nameref{restriction:high_information}) and (\nameref{restriction:connection_evidence} or \nameref{restriction:connection_legal}) and (\nameref{def:indemnity_1} or \nameref{def:indemnity_2}) & $\frac{p_{0}-p_{1}}{1-p_{1}} \Delta v$	& $0$				\\[1mm]
    (\nameref{restriction:medium_information} or \nameref{restriction:high_information}) and \nameref{restriction:connection_legal_2} and \nameref{def:indemnity_1}									& $p_{0} \Delta v$				& $0$				\\[1mm]
    (\nameref{restriction:medium_information} or \nameref{restriction:high_information}) and \nameref{restriction:connection_legal_2} and \nameref{def:indemnity_2}									& $\frac{p_{0}-p_{1}}{1-p_{1}} \Delta v$	& $0$				\\[1mm]
    \hline
  \end{tabular}
  \caption{Compensation valuation obtained in Examples \ref{example:medical_malpractice} and \ref{example:factual-dependence} for each possible outcome when the tortious action occurs and given different combinations of Restrictions.}
  \label{table:compensation_malpractice_example}
\end{table}

Although the compensation in Example \ref{example:medical_malpractice} is the same whether one applies Restriction \nameref{restriction:connection_evidence} or \nameref{restriction:connection_legal}, the pattern isn't general. Examples \ref{example:factual-independence} and \ref{example:factual-dependence} explore the difference between these restrictions. Example \ref{example:factual-dependence} is analogous to Example \ref{example:medical_malpractice} and, therefore, leads to the same compensation valuations as in Table \ref{table:compensation_malpractice_example}. However, in Example \ref{example:factual-independence}, Restriction \nameref{restriction:connection_evidence} generates the same model as Restriction \nameref{restriction:connection_legal_2} and, therefore, differs from Restriction \nameref{restriction:connection_legal}. Table \ref{table:compensation_connection_example} contrasts the compensation valuations obtained by applying different combinations of Restrictions in Example \ref{example:factual-independence}. If one uses Restriction \nameref{restriction:connection_legal}, then Examples \ref{example:factual-dependence} and \ref{example:factual-independence} share the same probabilistic model and compensations are equal. On the other hand, under Restriction \nameref{restriction:connection_evidence} or \nameref{restriction:connection_legal_2}, the counterfactual and factual outcomes are independent in Example \ref{example:factual-independence}. Consequently, under either of these restrictions the expected value of the counterfactual outcome that corresponds to a factual red ball is higher than under Restriction \nameref{restriction:connection_legal}. Hence, if the decision-maker allows the use of factual information to determine compensation and applies Restriction \nameref{def:indemnity_1}, he obtains a higher compensation valuation with Restrictions \nameref{restriction:connection_evidence} or \nameref{restriction:connection_legal_2} than with Restriction \nameref{restriction:connection_legal}. 

\begin{table}
  \centering
  \begin{tabular}{|c|c|c|}
    \hline
																												& \multicolumn{2}{|c|}{Factual Outcome}					\\
    \hline
    Restrictions																										& red					& blue				\\
    \hline
    \nameref{restriction:low_information} and (\nameref{restriction:connection_evidence} or \nameref{restriction:connection_legal} or \nameref{restriction:connection_legal_2}) and (\nameref{def:indemnity_1} or \nameref{def:indemnity_2})		& $(p_{0}-p_{1})\Delta v$		& $(p_{0}-p_{1})\Delta v$	\\
    (\nameref{restriction:medium_information} or \nameref{restriction:high_information}) and (\nameref{restriction:connection_evidence} or \nameref{restriction:connection_legal_2}) and \nameref{def:indemnity_1} 				 	& $p_{0} \Delta v$			& $0$				\\[1mm]
    (\nameref{restriction:medium_information} or \nameref{restriction:high_information}) and \nameref{restriction:connection_evidence} and \nameref{def:indemnity_2}			 							& $\frac{p_{0}-p_{1}}{1-p_{1}} \Delta v$& $0$				\\[1mm]
    (\nameref{restriction:medium_information} or \nameref{restriction:high_information}) and \nameref{restriction:connection_legal} and (\nameref{def:indemnity_1} or \nameref{def:indemnity_2}) 							& $\frac{p_{0}-p_{1}}{1-p_{1}} \Delta v$& $0$				\\[1mm]
    \hline
  \end{tabular}
  \caption{Compensation valuation obtained in Example \ref{example:factual-independence} for each possible outcome when the tortious action occurs and given different combinations of Restrictions.}
  \label{table:compensation_connection_example}
\end{table}

The differences between the combinations of Restrictions are further heightened when there are many possible outcomes, such as in Example \ref{example:indemnity_difference}. Table \ref{table:compensation_indemnity_example} presents the compensation valuations obtained in this example for each possible combination of Restrictions. In general, one can observe that the variability of compensations increases when the use of factual information increases from Restriction \nameref{restriction:low_information} to \nameref{restriction:high_information}. Also, for any fixed combination of use of factual information and connection between factual and counterfactual scenarios, the compensation under Restriction \nameref{def:indemnity_1} is always at least as high as under Restriction \nameref{def:indemnity_2}. Finally, while Restriction \nameref{restriction:connection_evidence} compares the factual and counterfactual scenarios based on evidence, Restrictions \nameref{restriction:connection_legal} and \nameref{restriction:connection_legal_2} use a connection defined by law. Restriction \nameref{restriction:connection_legal} focuses on the fact that the only difference between the possible outcomes in $O_{0}$ and $O_{1}$ (Table \ref{table:indemnity_difference}) is that $O_{1}$ substitutes $a_{5}$ by an extra possible occurrence of $a_{3}$. As a consequence, when factual information is allowed in combination with Restriction \nameref{restriction:connection_legal}, the victim is compensated only when $a_{3}$ is observed. Restriction \nameref{restriction:connection_legal_2} compares the factual value of each observed outcome with the average counterfactual value of $50$. The wide variety of compensation valuations that are found, by applying different combinations of Restrictions to Example \ref{example:indemnity_difference}, highlights the importance of the conceptual questions formulated at the beginning of this Section. 

\begin{table}
  \centering
  \begin{tabular}{|c|c|c|c|c|}
    \hline
																														& \multicolumn{4}{|c|}{Factual Outcome}				\\
    \hline																					
    Restrictions																												& $a_{1}$	& $a_{2}$	& $a_{3}$	& $a_{4}$	\\
    \hline
    \nameref{restriction:low_information} and (\nameref{restriction:connection_evidence} or \nameref{restriction:connection_legal} or \nameref{restriction:connection_legal_2}) and (\nameref{def:indemnity_1} or \nameref{def:indemnity_2})	& $15$		& $15$		& $15$		& $15$		\\
    \nameref{restriction:medium_information} and \nameref{restriction:connection_evidence} and \nameref{def:indemnity_1} 															& $36.6$	& $36.6$	& $0$		& $36.6$	\\
    \nameref{restriction:medium_information} and \nameref{restriction:connection_evidence} and \nameref{def:indemnity_2} 															& $25$		& $25$		& $0$		& $25$		\\
    \nameref{restriction:high_information} and \nameref{restriction:connection_evidence} and \nameref{def:indemnity_1} 																& $65$		& $5$		& $0$		& $40$		\\
    \nameref{restriction:high_information} and \nameref{restriction:connection_evidence} and \nameref{def:indemnity_2} 																& $50$		& $0$		& $0$		& $25$		\\
    (\nameref{restriction:medium_information} or \nameref{restriction:high_information}) and \nameref{restriction:connection_legal} and (\nameref{def:indemnity_1} or \nameref{def:indemnity_2})						& $0$		& $0$		& $37.5$	& $0$		\\
    \nameref{restriction:medium_information} and \nameref{restriction:connection_legal_2} and \nameref{def:indemnity_1} 															& $23.7$	& $23.7$	& $23.7$	& $0$		\\
    \nameref{restriction:medium_information} and \nameref{restriction:connection_legal_2} and \nameref{def:indemnity_2} 															& $18.7$	& $18.7$	& $18.7$	& $0$		\\
    \nameref{restriction:high_information} and \nameref{restriction:connection_legal_2} and \nameref{def:indemnity_1} 																& $45$		& $20$		& $15$		& $0$		\\
    \nameref{restriction:high_information} and \nameref{restriction:connection_legal_2} and \nameref{def:indemnity_2} 																& $40$		& $15$		& $10$		& $0$		\\
    \hline
  \end{tabular}
  \caption{Compensation valuation obtained in Example \ref{example:indemnity_difference} when the tortious action occurs and given different combinations of Restrictions.}
  \label{table:compensation_indemnity_example}
\end{table}

Next, we discuss the valuation of lost chances in which the victim can make a choice that affects his outcome. The main question in this type of situation is how to model what the victim would have chosen in the counterfactual scenario. We consider this question while building upon the results in this section.

\section{Valuation of lost chances with choices}
\label{section:choice}

Consider that, in the lost chance description, the victim is able to make a choice that affects his outcome. For example, under appropriate medical diagnosis, a victim might be able to choose among different treatments for his condition. Each choice influences the victim's final outcome in different ways. Description \ref{description:choice} provides an abstraction of this type of situation.

\begin{description_}
 \label{description:choice}
 The trier of fact determines that, in the following order, the tortfeasor committed a tortious action, the victim made a choice, $c$, among a set of choices $\mathcal{C}$, and among all relevant possibilities in a set $\mathcal{R}$, a result, $r$, occurred. This is the factual scenario. Furthermore, in the counterfactual scenario, had the tortious action not occurred, the choices in $\mathcal{C}$ and the results in $\mathcal{R}$ would occur with different probabilities. The trier of fact also assigns a value from the victim's perspective, $V(c,r) \in \mathbb{R}$, and a value from the tortfeasor's perspective, $V^{*}(c,r) \in \mathbb{R}$, to each combination of choice $c \in \mathcal{C}$ and result $r \in \mathcal{R}$, assigns for each value $v \in \mathbb{R}$ the amount of money $M(v)$ and $M^{*}(v)$ that has value $v$ for, respectively, the victim and tortfeasor and decides on what information, $\mathcal{K}$, about the result can be used to calculate compensation. Finally, the judge decides that it was the victim's duty to make a choice in a set $\mathcal{D} \subset \mathcal{C}$.
\end{description_}

From the perspective of the trier of fact, the choices in $\mathcal{C}$ and the results in $\mathcal{R}$ are similar with respect to their uncertainty. Usually, the trier of fact can't know for sure what the counterfactual choice or result would be. Given this uncertainty, we assume that the trier of fact assigns probabilities to both the victim's choices and the results. From this perspective, the model in Section \ref{section:elements} is applicable to Description \ref{description:choice}, using a space of outcomes that includes both the victim's choice and the result\footnote{Formally, for each $o \in \mathcal{O}$, $o=(c,r)$, where $c \in \mathcal{C}$ and $r \in \mathcal{R}$.}. This model is illustrated in Figure \ref{figure:description_choice_1}. In this Figure, the relevant outcomes are separated into two types of variables: while $C_{0}$ and $C_{1}$ represent, respectively, the counterfactual and factual choices of the victim, $R_{0}$ and $R_{1}$ represent, respectively, the counterfactual and factual results. One obtains Figure \ref{figure:description_no_choice} by calling $O_{0}=(C_{0},R_{0})$ and $O_{1}=(C_{1},R_{1})$. The arrows that connect to $V_{0}$ and $V_{1}$ indicate that the valuation of the victim's counterfactual and factual situations can depend on both the victim's choice and the final outcome, that is, $V_{0} = V(C_{0},R_{0})$ and $V_{1} = V(C_{1},R_{1})$. 

\begin{figure}[ht]
 \centering
 \begin{subfigure}{.4\textwidth}
  \begin{align*}
   \xymatrix{
    *+[Fo]{C_{0}} \ar[d] \ar@/_1pc/[dd] & *+[Fo]{F} \ar[dl] \ar[dr] \ar[l] \ar[r]	& *+[Fo]{C_{1}} \ar[d] \ar@/^1pc/[dd]	\\
    *+[Fo]{R_{0}} \ar[d]		& 						& *+[Fo]{R_{1}} \ar[d]			\\
    *+[F-]{V_{0}}			& 						& *+[F-]{V_{1},}
   }
  \end{align*}
  \caption{no restrictions.}
  \label{figure:description_choice_1}
 \end{subfigure}
 \begin{subfigure}{.4\textwidth}
  \begin{align*}
   \xymatrix{
    *+[Fo]{C_{0}} \ar[d] \ar@/_1pc/[dd] & *+[Fo]{F_{\mathcal{C}}} \ar[l] \ar[r]	& *+[Fo]{C_{1}} \ar[d] \ar@/^1pc/[dd]	\\
    *+[Fo]{R_{0}} \ar[d]		& *+[Fo]{F_{\mathcal{R}}} \ar[l] \ar[r]	& *+[Fo]{R_{1}} \ar[d]			\\
    *+[F-]{V_{0}}			& 					& *+[F-]{V_{1}}
   }
  \end{align*}
  \caption{$F_{\mathcal{R}}$ doesn't directly influence $F_{\mathcal{C}}$.}
  \label{figure:description_choice_2}
 \end{subfigure}

 \caption{Influence diagrams for Description \ref{description:choice}.}
 \label{figure:description_choice}
\end{figure}
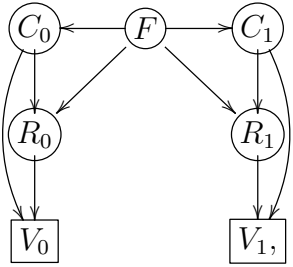
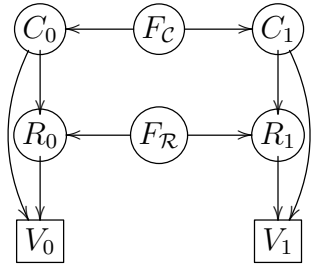

Despite the formal similarity between choices and outcomes, the separation in Figure \ref{figure:description_choice} is important due to a practical matter. It is often hard for either party to argue about what the victim would have done had the tortious action not occurred. Therefore, one can expect that the probabilities on $F$ and $C_{0}$ will often depend on legal presumptions and on which party has the burden of proof regarding these variables. Also, some of the choices available to the victim might be wrongful actions. This criterion wasn't used to classify the outcomes in Section \ref{section:no_choice}. In this Section, we discuss presumptions and consequences that are applicable to the counterfactual choices. Similarly to Section \ref{section:no_choice}, our analysis of the probability model used by the trier of fact relies on the answers given to three conceptual questions:

\begin{itemize}
 \item At the time the victim made his choice, had the tortious action not occurred, how much information about the result would the victim have available?
 \item Had the tortious action not occurred, how would the victim use the available information about the result in order to make a choice?
 \item The victim's choices are classified by whether they are wrongful or not. How does this classification affect the valuation of lost chances?
\end{itemize}

The importance of the first question can be understood by contrasting the probability models in Figures \ref{figure:description_choice_1} and \ref{figure:description_choice_2}. In Figure \ref{figure:description_choice_1} there is a single connection between the factual and counterfactual scenario. That is, the underlying variables that connect the factual and counterfactual choices are the same as the ones that connect the factual and counterfactual results. The sharing of the underlying variables allows the hypothesis that, at the time the victim made his choice, he had more information about the underlying variables that affect the result than that which was presented as evidence to the trier of fact. By contrast, in Figure \ref{figure:description_choice_2}, the connection between the counterfactual and factual situations, $F$, is separated into two variables: $F_{\mathcal{C}}$ and $F_{\mathcal{R}}$. While $F_{\mathcal{C}}$ is the connection between factual and counterfactual choices, $F_{\mathcal{R}}$ is the connection between the results. The lack of an arrow connecting $F_{\mathcal{R}}$ to $F_{\mathcal{C}}$ assumes that the underlying variables that connect the counterfactual and factual results don't directly influence the ones that connect the counterfactual and factual choices. Therefore, as opposed to Figure \ref{figure:description_choice_1}, Figure \ref{figure:description_choice_2} assumes that, at the time the victim made his decision, he didn't have access to the underlying variables that connect the factual and counterfactual results. Example \ref{example:treatment_choice} illustrates the difference between Figures \ref{figure:description_choice_1} and \ref{figure:description_choice_2}.

\begin{example}
 \label{example:treatment_choice}
 Consider that the individuals of a population can be classified in types $1$ and $2$. If an individual is of type $1$, then treatment $1$ is more effective than treatment $2$ to cure a given disease, $D$. Similarly, if an individual is of type $2$, then treatment $2$ is more effective than treatment $1$ to cure $D$. On average, both treatments are equally effective to cure an individual of an unknown type. In order to find a person's type, one must perform an exam that takes a large amount of time.
 
 In an emergency situation, a physician is responsible for treating a patient infected by $D$. Although the physician had the legal obligation of consulting with the patient about which treatment to apply, negligently and unaware of the victim's type, he applied treatment $1$ without communicating with the victim.
\end{example}

If either the model in Figure \ref{figure:description_choice_1} or \ref{figure:description_choice_2} were applied to Example \ref{example:treatment_choice}, then the patient's type would be an underlying variable that affects the result. If the model in Figure \ref{figure:description_choice_1} is used, then the patient's type could also affect the patient's decision regarding what treatment to use. On the contrary, the model in Figure \ref{figure:description_choice_2} doesn't allow this underlying variable to affect both choice and result. Figure \ref{figure:description_choice_2} allows two scenarios. In the first scenario, the patient isn't able to prove that, at the time he made his choice, he was aware of his type. Therefore, the model assumes that the patient couldn't act as if his type were known. In the second scenario, the patient is able to prove that he was aware of his type. In this case, the trier of fact should model the victim's type as a known fact, instead of as an underlying variable.

By restricting the use of the underlying variables by the victim, the probability model in Figure \ref{figure:description_choice_2} limits the way in which evidence about the factual choice and result influences the respective counterfactuals. Given that the factual choice is known, the factual result brings no information about the counterfactual choice. In this sense, the factual result influences the counterfactual choice only by bringing information about the factual choice. This limitation can be justified on the grounds that, if the victim had access to information about the underlying variables at the time he made his choice, then it would be the victim's burden to prove this information existed. The above presumption is described in Restriction \ref{restriction:choice_conditional_independence}. 

\begin{restriction_}[VK]
 \label{restriction:choice_conditional_independence}
 Victim's Knowledge. The victim should be presumed to know at each time only the facts that were proven in court to be known by victim at that time\footnote{Formally, the joint distribution for $(C_{0},C_{1},F_{\mathcal{C}},R_{0},R_{1},F_{\mathcal{R}})$ should follow the conditional independencies in Figure \ref{figure:description_choice_2}.}.
\end{restriction_} 

Besides the question of how much information was available to the victim, the trier of fact should also evaluate how the victim would use this information in order to make a choice. One position is that, in the lack of evidence, the judge should presume that the victim would make the best valued choice. Since the victim could lawfully choose any of the options, it is usually unreasonable to assume he would choose an undervalued option. Restriction \ref{restriction:choice_evidence} presents a method for obtaining the presumed choice.

\begin{restriction_}[IT-CP]
 \label{restriction:choice_evidence}
 \textit{Iuris Tantum} Choice Presumption. In the lack of evidence about how a party would have made a choice in Description \ref{description:choice}, the judge should assume that the party would certainly select the choice that is obtained following the steps\footnote{Formally, let $c^{*}_{0}=\arg \min_{c \in \mathcal{D}}{E[V_{0}|C_{0}=c]}$. Presume $P(C_{0}=c^{*}_{0})=1$.}:
 \begin{enumerate}
  \item The judge should consider an hypothetical situation in which the judge knows the same information as the party at the time the party made his choice.
  \item In the hypothetical situation, the judge picks the dutiful choice that would maximize the expected value obtained by the party according to the probability model assessed by the trier of facts.
 \end{enumerate}
\end{restriction_}

One can also defend that the choice obtained by Restriction \ref{restriction:choice_evidence} should always be presumed, no matter what evidence is available about the counterfactual choice. From this perspective, indemnification should be rewarded according to the value of the choice itself, and not according to how the victim would have used the choice. In favor of this argument, consider the example in \citet{Jansen1999} in which a person is robbed before going to a casino. The expected loss of money in the casino shouldn't be taken into account while calculating indemnification. By analogy, the possible uses of the lost choice shouldn't be taken into account while calculating indemnification. The value of a choice is usually taken to be the highest expected value obtainable over all possible options. Restriction \ref{restriction:choice_legal} presents this position.

\begin{restriction_}[II-CP]
 \label{restriction:choice_legal}
 \textit{Iuris et de Iure} Choice Presumption. No matter what evidence is presented, the counterfactual choice in Description \ref{description:choice} should always be presumed to be the option obtained according to the steps outlined in Restriction \ref{restriction:choice_evidence}.
\end{restriction_}

Although Restrictions \ref{restriction:choice_evidence} and \ref{restriction:choice_legal} present the same presumption about the counterfactual choice of the victim, the restrictions differ with respect to the force of this presumption. Restriction \ref{restriction:choice_evidence} presents a relative presumption, which may be rebutted by evidence. By contrast, Restriction \ref{restriction:choice_legal} presents an absolute presumption, which cannot be rebutted by evidence.

The last conceptual question we consider is about how the valuation of damages is affected by the victim's duty in Description \ref{description:choice}. Often, the victim might have the duty to take reasonable care to mitigate the damage caused by the tortfeasor. For example, consider a case in which a tenant vacates a rented building before the end of his lease. The landlord has the duty to mitigate the damage by trying to rent the building to a new tenant. In this type of case, the victim's breach of duty is a tortious action and, therefore, we can imagine a new hypothetical case in which the roles of victim and tortfeasor are exchanged. We say that the tortfeasor and victim in Description \ref{description:choice} are, respectively, the dual victim and the dual tortfeasor in Description \ref{description:choice_2}.

\begin{description_}
 \label{description:choice_2}
 The trier of fact determines that the dual tortfeasor had a duty to make a decision in a set $\mathcal{D} \subset \mathcal{C}$, but the dual tortfeasor neglected his duty and chose $c \notin \mathcal{D}$. Among all relevant possibilities in a set $\mathcal{R}$, a result, $r$, occurred. This is the factual scenario. Furthermore, in the counterfactual scenario, had the dual tortfeasor performed his duty and chosen $c \in \mathcal{D}$, the results in $\mathcal{R}$ would occur with different probabilities. The trier of fact also assigns a value from the dual victim's perspective, $V^{*}(c,r) \in \mathbb{R}$ to each combination of choice $c \in \mathcal{C}$ and result $r \in \mathcal{R}$, assigns for each value $v \in \mathbb{R}$ the monetary value $M(v) \in \mathbb{R}$ that has value $v$ to the dual victim and decides on what information, $\mathcal{K}$, about the result can be used to calculate compensation.
\end{description_}

The probability model for Description \ref{description:choice_2} is presented in Figure \ref{figure:mitigation_individual}. The variables $C_{1}$ and $R_{1}$ are the same as in Figure \ref{figure:description_choice_2}. The variables $C_{1,0}$ and $R_{1,0}$ are, respectively, the choice of the dual tortfeasor and it's result, under the assumption that the tortfeasor acted wrongfully and the dual tortfeasor performed his duty. The indemnification in Description \ref{description:choice_2} is computed according to the dual victim's values, that is, $V^{*}_{1} = V^{*}(C_{1},R_{1})$ and $V^{*}_{1,0} = V^{*}(C_{1,0},R_{1,0})$. Figure \ref{figure:mitigation_joint} presents the full influence diagram obtained by considering simultaneously Descriptions \ref{description:choice} and \ref{description:choice_2}.

In case the victim in Description \ref{description:choice} neglected his duty, then he is a dual tortfeasor such as in Description \ref{description:choice_2}. Consequently, the monetary compensation obtained by the victim from the tortfeasor in the lost chance case in Description \ref{description:choice} should be reduced by the monetary amount that the victim owes to the tortfeasor in the lost chance case in Description \ref{description:choice_2}. This understanding conforms to the practice of reducing the compensation of a victim who doesn't take reasonable precautions to mitigate damages. Restriction \ref{restriction:mitigation} presents this position.

\begin{figure}[ht]
 \centering
 \begin{subfigure}{.4\textwidth}
  \begin{align*}
   \xymatrix{
    *+[Fo]{C_{1,0}} \ar[d] \ar@/_1pc/[dd] & *+[Fo]{F^{*}_{\mathcal{C}}} \ar[l] \ar[r]	& *+[Fo]{C_{1}} \ar[d] \ar@/^1pc/[dd]	\\
    *+[Fo]{R_{1,0}} \ar[d]		& *+[Fo]{F^{*}_{\mathcal{R}}} \ar[l] \ar[r]	& *+[Fo]{R_{1}} \ar[d]			\\
    *+[F-]{V^{*}_{1,0}}			& 						& *+[F-]{V_{1}^{*}}
   }
  \end{align*}
  \caption{Influence diagram for Description \ref{description:choice_2}.}
  \label{figure:mitigation_individual}
 \end{subfigure}
 \begin{subfigure}{.4\textwidth}
  \begin{align*}
   \xymatrix{
    *+[Fo]{C_{0}} \ar[d] \ar@/_1pc/[dd] & *+[Fo]{F_{\mathcal{C}}} \ar[l] \ar[r]	& *+[Fo]{C_{1}} \ar[d] \ar@/_/[ddl] \ar@/^/[ddr]& *+[Fo]{F_{\mathcal{C}}^{*}} \ar[l] \ar[r]	& *+[Fo]{C_{1,0}} \ar[d] \ar@/^1pc/[dd]	\\
    *+[Fo]{R_{0}} \ar[d]		& *+[Fo]{F_{\mathcal{R}}} \ar[l] \ar[r]	& *+[Fo]{R_{1}} \ar[dl]	\ar[dr]			& *+[Fo]{F_{\mathcal{R}}^{*}} \ar[l] \ar[r]	& *+[Fo]{R_{1,0}} \ar[d]		\\
    *+[F-]{V_{0}}			& *+[F-]{V_{1}}				& 						& *+[F-]{V_{1}^{*}}				& *+[F-]{V_{1,0}^{*}}
   }
  \end{align*}
  \caption{Influence diagram for Descriptions \ref{description:choice} and \ref{description:choice_2}.}
  \label{figure:mitigation_joint}
 \end{subfigure}
 \caption{Influence diagrams related to Description \ref{description:choice_2}.}
 \label{figure:mitigation}
\end{figure}
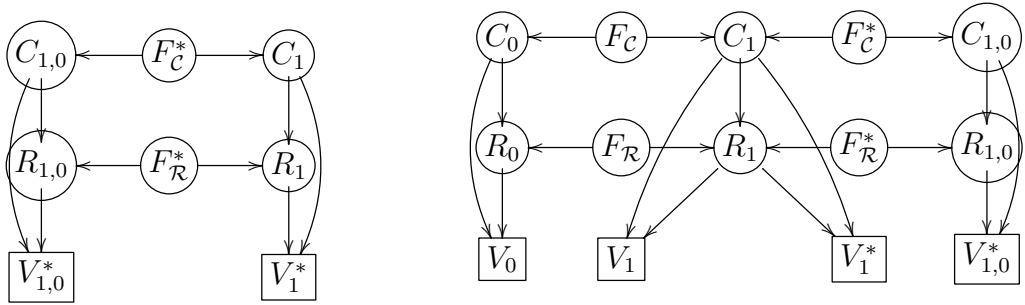

\begin{restriction_}
 \label{restriction:mitigation}
 In case the victim in Description \ref{description:choice} neglects his duty, then his compensation monetary value should be subtracted by the compensation monetary value due to the tortfeasor because of this breach in duty\footnote{Formally, let $X_{0}$ and $X_{1,0}$ be, respectively, the compensations obtained in Figures \ref{figure:description_choice_2} and in \ref{figure:mitigation_individual}. The monetary value awarded to the victim should be $\max(0,M(V_{1}+X_{0})-M(V_{1})-M^{*}(V^{*}_{1}+X_{1,0})+M^{*}(V^{*}_{1}))$. For example, if the judge adopts Restriction \nameref{def:indemnity_1}, then the monetary value awarded to the victim should be $\max(0,\max(0,M(E[V_{0}|\mathcal{K}])-M(E[V_{1}|\mathcal{K}]))-\max(0,M(E[V^{*}_{1,0}|\mathcal{K}])-M(E[V^{*}_{1}|\mathcal{K}])))$.}.
\end{restriction_}

By combining the restrictions in Sections \ref{section:no_choice} and \ref{section:choice}, one obtains compensation policies for lost chance cases with choices. In Section \ref{section:choice_example} we apply these compensation policies to a lost chance case in which the victim had the right to a choice, \citet{Matos2005}.

\section{An application of the valuation of lost chances with choices}
\label{section:choice_example}

An important precedent in the application of the Lost Chance Doctrine in Brazil is \citet{Matos2005}, known as the ``Millionaire Show'' case. Similar to its American version, the ``Millionaire Show'' was a Brazilian television show in which in each round a guest could answer a multiple choice question. If the guest's answer was correct, her cumulative prize would increase and she would pass to the next round. If the guest's answer was incorrect she would be awarded a fraction of her cumulative prize and her participation in the show would end. Finally, if the guest decided not to answer, she would get the full extent of her cumulative prize and her participation would end.

The guest in 06/15/2000, Matos, got to the last round of the show. Her cumulative prize was $R\$500,000$. The last multiple choice question which was presented to her had four options. The question asked the proportion of Brazil's territory that the country's Constitution reserves to natives: 22\%, 2\%, 4\% or 10\%? Had Matos answered correctly, she would get $R\$1,000,000$. Had she answered incorrectly, she would get $R\$300$. She decided not to answer and keep the cumulative prize of R\$500,000. 

Nevertheless, it was later found out that, although the plaintiff had registered one of the alternative answers as correct, none of them were actually correct. The Brazilian Constitution doesn't directly specify a percentage of Brazil's territory reserved to natives, but specifies that natives have the right over the land that they had traditionally occupied\footnote{\citet{Brasil1988} specifies: ``art. 231. s\~{a}o reconhecidos aos \'{i}ndios sua organiza\c{c}\~{a}o social, costumes, l\'{i}nguas, cren\c{c}as e tradi\c{c}\~{o}es, e os direitos origin\'{a}rios sobre as terras que tradicionalmente ocupam, competindo \`{a} Uni\~{a}o demarc\'{a}-las, proteger e fazer respeitar todos os seus bens''.}. Based on this error, Matos filed a lawsuit against the producers of the show. Her case reached the Superior Court of Justice, the highest appellate court in Brazil for non-constitutional matters. It was the first case the Court decided using the Lost Chance Doctrine. Although the Court ruled that Matos wasn't able to prove that she would get the $R\$1,000,000$ prize if faced with a proper question, it also decided that Matos had been wrongfully deprived of her right to answer a properly formulated question and that the producers of the show were liable for this damage. The Court stated that the ``mathematical probability'' of correctly answering a multiple choice question with four options was $25\%$ and, therefore, Matos should be compensated with 25\% of the difference between the total prize and the R\$500,000 that she had already received. Next, we compare the Court's valuation with the method we propose in Sections \ref{section:no_choice} and \ref{section:choice}. 

We start by exposing the elements of Description \ref{description:choice} that are present in \citet{Matos2005}. The first element in Description \ref{description:choice} is the set of choices that the victim could choose from, $\mathcal{C}$. After the multiple choice question was presented to Matos, she could choose between two options: either to answer or not to answer the question. That is, $\mathcal{C} = \{\text{answer},\text{not answer}\}$. The next element in Description \ref{description:choice} is the set of possible results of the victim's choice, $\mathcal{R}$. In case Matos chose to answer, she would receive either $R\$1,000,000$ or $R\$300$. In case Matos chose not to answer, then she would certainly receive $R\$500,000$. Therefore, the set of all relevant results that could follow from Matos's choice was $\mathcal{R} = \{R\$300, R\$500,000, R\$1,000,000\}$

Given the previous two elements, we proceed to the relevant probabilities for \citet{Matos2005}. No matter if the tortious action occurred or not, Matos would certainly receive $\$500,000$ in case she chose not to answer\footnote{Formally, $P(R_{0}=\$500,000|C_{0}=\text{answer}) = P(R_{1}=\$500,000|C_{1}=\text{not answer}) = 1$.}. By contrast, in case Matos had answered the question, then two outcomes could occur. In the factual scenario, the alternative that had been marked as correct by the plaintiff was actually incorrect. In this case, Matos's best hope to obtain the answer marked as correct was to randomly guess with equal probability one of the alternatives\footnote{Formally, $P(R_{1}=R\$1,000,000|C_{1}=\text{answer}) = 0.25$.}. In the counterfactual scenario, one should consider a question such that the answer marked as correct was also a correct answer to the question that was asked. In this case, Matos would have been able to rely on her knowledge in order to find the correct answer.

In this respect, we disagree with the Court's decision. Matos's probability of correctly answering a properly designed question should depend, at least, on Matos's skill and the difficulty of the question. Probability theory does not specify that the ``mathematical probability'' of correctly answering to a multiple choice question with four alternatives is 0.25. On the contrary, this probability should be judged according to the evidence that was available. For example, the alternatives that were presented to Matos ($22\%$, $2\%$, $4\%$ and $10\%$) were qualitatively different. In a properly formulated question, qualitative differences can be used to rule out some of the alternatives (is it reasonable to expect more than a fifth of Brazil's territory to be reserved to natives?). Using this method and then guessing, Matos would improve her probability from 25\%. Furthermore, Matos had shown skill by correctly answering all questions that had previously been asked in the show. This evidence also gives reason to believe that Matos would probably be able to choose the correct alternative with a greater probability than if she were to randomly guess one of the alternatives with equal probability. 

By contrast, the defendant claimed that the questions' difficulties progressed as the contestant got closer to the final prize. For example, among the first questions that were asked to previous contestants, there was ``what fruit is dried to obtain dried plums? 1) plum, 2) grape, 3) peach, 4) melon''. By contrast, the defendant presented evidence that, until that date, no contestant had successfully answered the last question in the show. Therefore, the defendant argued that the high probability obtained by considering Matos's record of successful answers should be balanced by the higher difficulty of the last question.

A possible alternative to the Court's evaluation of the probability was to subjectively balance the evidence presented by the victim and the defendant. In the following, we aim for a general analysis of Matos's probability of success. We refer to the probability that Matos would choose the correct alternative of a properly formulated question by $P(R_{0}=R\$1,000,000|C_{0}=\text{answer})$ and describe the consequences of all possible choices for this probability on the valuation of the lost chance.

The next elements in Description \ref{description:choice} are the victim's utility for each outcome, $V$, and the amount of money that is necessary to increase the victim's utility by $v$, $M(v)$. In principle, $V$ could depend on both the choice, $c$, and the result, $r$. Nevertheless, Matos claimed a lost chance over only the monetary gain. For example, Matos didn't claim a loss over the excitement of answering a properly formulated question. Therefore, we assume that the value of each outcome is a function of the money that is obtained in that outcome. Using this simplification, $V$ returns a valuation of money and $M$ returns the amount of money that corresponds to a given value. That is, $M$ is a function of $V$\footnote{Formally, we assume that $V$ is a monotone increasing function and that $M(v) = V^{-1}(v)$.} and, therefore, it is sufficient to specify $V$ alone.

In \citet{Matos2005}, the Court didn't directly specify the victim's utility for each outcome. Instead, the Court directly assigned that the compensation should be 25\% of a difference between monetary amounts. By using this rule, the Court implicitly assumed the identity between amount of money and value. One consequence of this assumption is that the victim would value equally the options of receiving R\$500,000 with certainty and receiving R\$1,000,000 with probability 50\% and nothing otherwise. It might be unreasonable to believe a person has this type of preference. For example, it would be a common answer to prefer the certain R\$500,000 to the risky R\$1,000,000, a phenomenon known as risk aversion. 

Risk aversion is such a commonly observed phenomenon that we consider the effect of other valuations of money on the compensation. We value money with a class of functions\footnote{Formally, $V(m) = \frac{1-m^{1-\theta}}{\theta-1}$, for $0 \leq \theta \leq 1$. In case $\theta = 0$, utility is linear and there is no risk aversion. Risk aversion increases with $\theta$ and one obtains $V(m) = \log(m)$ as $\theta$ goes to $1$.}, presented in \citet{Kadane2011}, that has a parameter that permits adjustment to the amount of risk aversion of the victim. The trier of fact's judgment on this parameter could be informed, for example, by the fact that Matos decided to answer all the questions before the last one. Based on the questions' difficulties and Matos's skill, the trier of fact would be able to set a bound for Matos's risk aversion.

The last element of Description \ref{description:choice} defines the set of choices the victim had the duty to choose from, $\mathcal{D}$. That is, we evaluate whether Matos took appropriate measures to mitigate her damages. This element wasn't discussed by either the parties or the Court. Since Matos chose the safe option and avoided answering the improperly formulated question that was asked to her, we believe her factual choice was dutiful\footnote{Formally, $\mathcal{D} \subset \{\text{answer}\}$.}. 

Given the previously defined elements, the Restrictions in Sections \ref{section:no_choice} and \ref{section:choice} completely specify the lost chance valuation. In Matos's case, the factual result was a certain consequence of her chance. As a consequence, all the combinations of the different types of Restrictions provide the same valuation.

Figure \ref{fig:compensation_matos_1} illustrates the compensation that should be awarded to Matos under the two extreme cases of risk aversion. The left graph represents the case in which Matos has no risk aversion. In this case, Matos would need at least a probability of $\frac{500,000-300}{1,000,000-300}$ (approximately $50\%$) of correctly answering the question in order to obtain some compensation. Intuitively, if this probability were lower than $\frac{500,000-300}{1,000,000-300}$, then, in the counterfactual scenario, not answering would be the best choice for Matos. After the probability of $\frac{500,000-300}{1,000,000-300}$ is attained, Matos's compensation increases linearly until it reaches $R\$500,000$ when her probability of correctly answering is $1$. The right graph represents the case in which Matos has a proportion of risk aversion of $1$. In this case, Matos would need at least a probability of $\frac{\log(500,000)-\log(300)}{\log(1,000,000)-\log(300)}$ (approximately 91.5\%) to be compensated. This value is higher than in the previous case because, under risk aversion, Matos would take the risky chance of answering the question only if she were extremely confident that she would answer correctly. Similarly to the previous graph, compensation increases with Matos's probability of success and reaches $R\$500,000$ when this probability is $1$.

\begin{figure}
  \centering
  \includegraphics[scale=0.75]{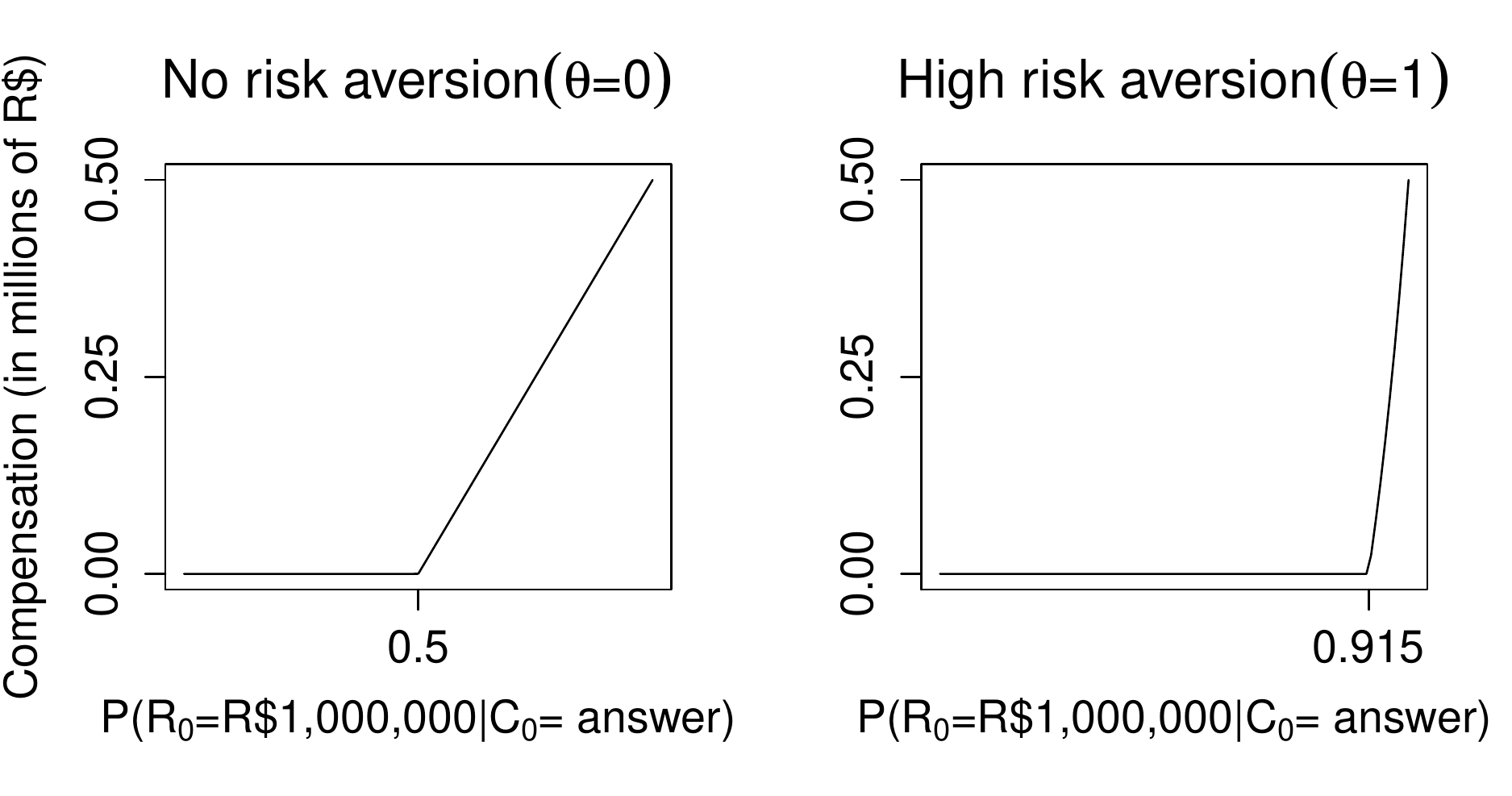}
  \caption{Compensation that should be awarded in \citet{Matos2005} as a function of Matos's probability of correctly answering the question.}
  \label{fig:compensation_matos_1}
\end{figure}

Figure \ref{fig:compensation_matos_2} illustrates the variation of Matos's compensation in terms of her risk aversion and probability of correctly answering the question. While the white coloring indicates points such that Matos should be awarded no compensation, the grey coloring indicates those such that Matos should be awarded some compensation. One can observe that, for a fixed probability of success, Matos's compensation decreases as her risk aversion increases. This effect occurs because, as Matos becomes more averse to risk, the more she prefers the safe option of receiving R\$500,000 to the risky option of competing for R\$1,000,000. The border between the two lightest grey regions marks the points such that Matos receives R\$125,000, the amount that was awarded to her by the Court. If we use no risk aversion, as the Court did, Matos would require a probability of 62.5\% in order to be awarded R\$125,000. This value is much higher than the probability of 25\% that was used by the Court. Similarly, if Matos's risk aversion parameter were $1$, then she would need a probability of $94.2\%$ to get her award to $R\$125,000$.

\begin{figure}[t]
  \centering
  \includegraphics{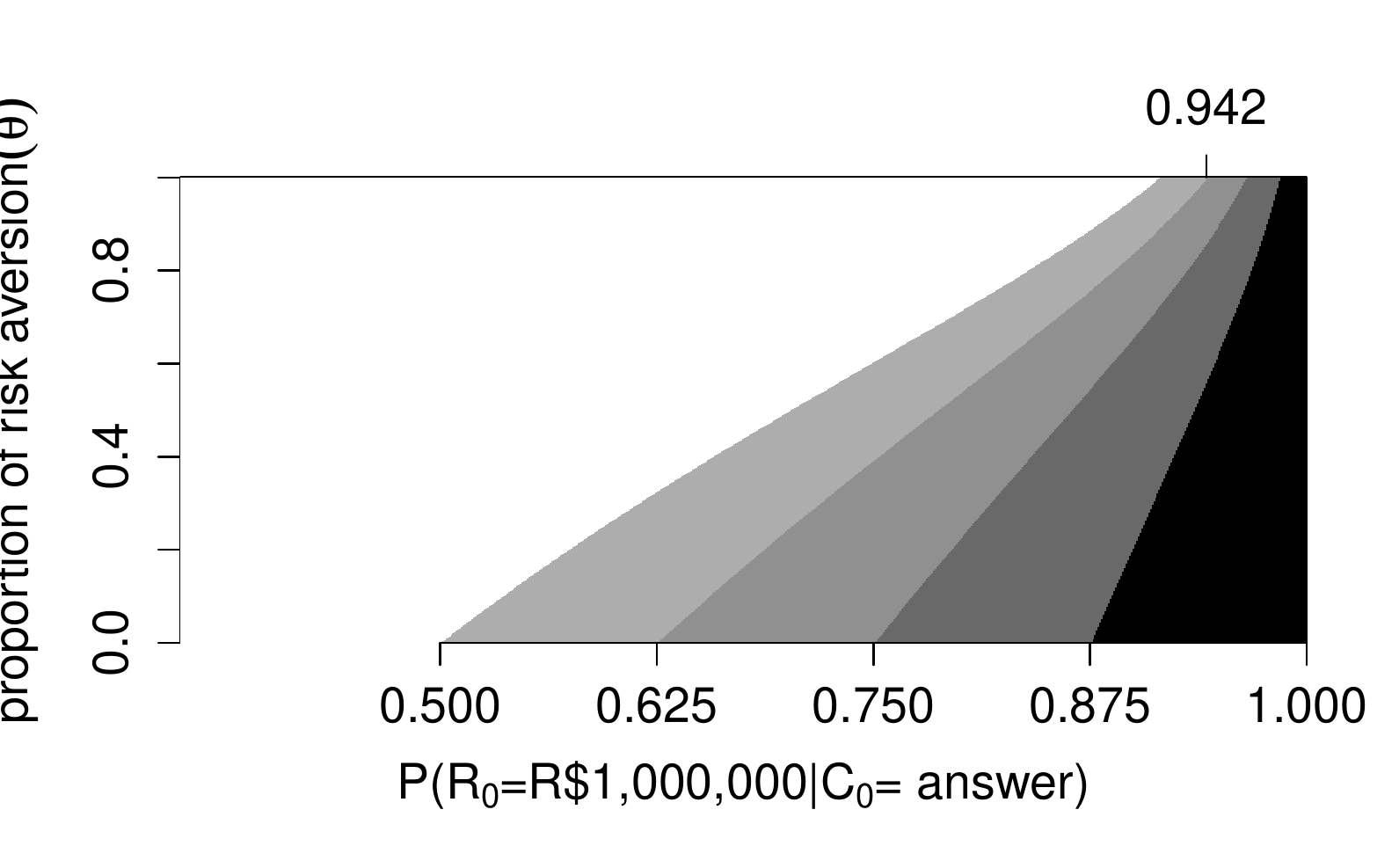}
  \caption{Illustration of the compensation that should be awarded in \citet{Matos2005} as a function of Matos's risk aversion and Matos's probability of correctly answering to a proper question. The white region indicates the points such that no compensation is awarded. The colors are ordered from light gray to dark gray and indicate compensation according to the following intervals: $(R\$0;R\$125,000]$, $(R\$125,000;R\$250,000], (R\$250,000;R\$375,000]$, $(R\$375,000;R\$500,000]$.}
  \label{fig:compensation_matos_2}
\end{figure}

The above analysis exposes an argument that wasn't directly recognized by the Court. Matos's lost chance involved a possibility of losing her cumulative prize. Consequently, the lost chance might have been worse than the factual outcome, that is, the analogy between Matos's case and the description in the rule of proportional damage fails. By using this analogy, the Court avoided explicitly weighting the importance of elements such as the rules of the television show, Matos's skill, Matos's aversion to risk and the question's difficulty. These elements are key figures in our analysis and could change the compensation to any value between $R\$0$ and $R\$500,000$.

\section{Conclusion}
\label{sec:conclusion}

What legal restrictions inform the valuation of lost chances? Despite the practical importance of this question, it has seldom been discussed explicitly. On the contrary, the discussion around the valuation of lost chances has mostly revolved around the formula prescribed by the rule of proportional damage. 

The above focus on a single formula is dangerous for at least two reasons. The first reason is that the formula applies only to a very specific type of case. Furthermore, when slight variations of this specific case (such as medical misdiagnosis) are considered, the different formulas that have been proposed can yield widely different valuations. Without a justification for choosing a given formula, the compensation value that is obtained is arbitrary. The second reason is that the formulas that have been discussed require the trier of fact to specify only one or two probabilities and a single monetary value. As a consequence, they induce the trier of fact to disregard elements that don't easily fall into these categories. For example, in \citet{Matos2005}, the Court ignored relevant questions such as Matos's skill, the difficulty of the questions, Matos's risk aversion and the different possible outcomes of the lost chance. 

Based on the above considerations, we propose a discussion of the legal restrictions that guide the valuation of lost chances. We frame this discussion in terms of six conceptual questions. 

The first three questions can be applied to every lost chance case: How much information about the factual outcome can be used in the valuation of lost chances? How does the law connect the scenario that was observed to the scenario that would have occurred but for the tortious action? What is the meaning of indemnification for a lost chance? Despite the qualitative nature of these questions, we show that the answers that are given to them completely specify compensation valuation in a description that is more general than the one required for the application of the rule of proportional damage.

We explore alternative combinations of answers to the above questions using two types of examples. As a first example, we use the typical case of medical misdiagnosis. In this case, we show in Table \ref{table:compensation_malpractice_example} that the main formulas that have been proposed in the literature can be obtained from specific answers. Consequently, the differences between these qualitative answers can be used to understand the proposed formulas and make justified choices between them. A second example, consists of a though experiment that involves a lost chance with $5$ possible outcomes. As a summary, Table \ref{table:compensation_indemnity_example} shows that different combinations of answers to the conceptual questions can lead to radically different compensation valuations. Some of these compensations disagree completely even with respect to what outcomes should be compensated. This disagreement illustrates the necessity of discussing the conceptual questions that were raised.

The other three conceptual questions we present apply to cases in which the victim had the right to a choice that could affect the probabilities of his possible outcomes: At the time the victim made his choice, how much information did he have available? How would the victim use this information to make a choice? Is a victim's compensation affected by a wrongful choice? Since it is hard to predict the behavior of a human, both of the parties might fail to answer the first two questions. In this case, we discuss possible legal presumptions.

We use the six conceptual questions to study \citet{Matos2005}. Despite this case being Brazil's Supreme Court of Justice's first application of the lost chance doctrine, in this case the victim had multiple possible outcomes and also had the right to make a choice. The conceptual questions we propose raise relevant elements for compensation that weren't explicitly considered by the Court's decision.

\section*{Acknowledgement}

The authors are grateful for the insightful discussions with Celso Campilongo, Andr\'{e} Katsurada, Caroline Mitchell, Marcelo Nunes, Carlos Pereira, Ted Permigiani, Fernando Rozenblit, Julio Stern and Julio Trecenti.

\bibliographystyle{lpr}
\bibliography{LostChance}

\vspace{-5mm}
\section*{Appendix}
\begin{definition}
  In the following proofs, we use:
  \begin{enumerate}
   \item $R(X) = E[(V_{0}-(V_{1}+X))^{2}]$.
   \item $\Delta V_{\mathcal{K}} = E[V_{0}-V_{1}|\mathcal{K}]$.
  \end{enumerate}
\end{definition}

\begin{lemma}
 \label{lemma:l2_trick}
 If $E[|V_{0}-V_{1}|] < \infty$, then for every $X, X^{*} \in \mathcal{X}$ such that $R(X) < \infty$ and $R(X^{*}) < \infty$,
 \begin{align*}
  R(X)-R(X^{*}) &= E[(X-\Delta V_{\mathcal{K}})^{2}] - E[(X^{*}-\Delta V_{\mathcal{K}})^{2}]
 \end{align*}
\end{lemma}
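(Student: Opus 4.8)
The plan is to avoid expanding $R(X) = E[(V_{0}-(V_{1}+X))^{2}]$ directly in terms of $V_{0}-V_{1}$, because the hypotheses do not guarantee that $V_{0}-V_{1}$ is square-integrable, so such an expansion would involve possibly infinite quantities like $E[(V_{0}-V_{1})^{2}]$. Instead I would organize the whole argument around the single quantity
\[
 Z \;=\; X-(V_{0}-V_{1}),
\]
which \emph{is} square-integrable: by definition $R(X)=E[Z^{2}]$, and this is finite by assumption. Recall also that by Definition \ref{def:compensation} the compensation valuation $X$ is $\mathcal{K}$-measurable, and that the hypothesis $E[|V_{0}-V_{1}|]<\infty$ guarantees that $\Delta V_{\mathcal{K}}=E[V_{0}-V_{1}\mid\mathcal{K}]$ is well defined.

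The key step is the conditional variance decomposition: for any $Z\in L^{2}$,
\[
 E[Z^{2}] \;=\; E\!\left[(E[Z\mid\mathcal{K}])^{2}\right] + E\!\left[(Z-E[Z\mid\mathcal{K}])^{2}\right],
\]
where every term is finite because conditional expectation is an $L^{2}$-contraction (so $E[Z\mid\mathcal{K}]\in L^{2}$) and the cross term vanishes by the tower property together with the $\mathcal{K}$-measurability of $E[Z\mid\mathcal{K}]$. I would then compute the two pieces explicitly. Since $X$ is $\mathcal{K}$-measurable, $E[Z\mid\mathcal{K}]=X-\Delta V_{\mathcal{K}}$, and therefore
\[
 Z-E[Z\mid\mathcal{K}] \;=\; \Delta V_{\mathcal{K}}-(V_{0}-V_{1}).
\]
The crucial observation is that this residual does not involve $X$ at all.

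Consequently $R(X)=E[(X-\Delta V_{\mathcal{K}})^{2}]+\kappa$, where $\kappa=E[(\Delta V_{\mathcal{K}}-(V_{0}-V_{1}))^{2}]$ is a finite constant (it is the squared $L^{2}$-norm of $Z-E[Z\mid\mathcal{K}]$) that is the same for every admissible compensation valuation. Applying the identical computation to $X^{*}$ gives $R(X^{*})=E[(X^{*}-\Delta V_{\mathcal{K}})^{2}]+\kappa$ with the \emph{same} $\kappa$, and subtracting the two expressions cancels $\kappa$ and yields the claim.

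The part that needs care is the integrability bookkeeping rather than the algebra: I must justify that $E[(X-\Delta V_{\mathcal{K}})^{2}]$ and the constant $\kappa$ are finite, which is exactly what the $L^{2}$-contraction of conditional expectation provides once $Z\in L^{2}$ is established, so that the subtraction of two finite quantities is legitimate. The conceptual heart of the proof is the realization that $R(X)$ and $E[(X-\Delta V_{\mathcal{K}})^{2}]$ differ only by the $X$-free residual $\kappa$; everything else is routine.
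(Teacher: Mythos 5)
Your proposal is correct and follows essentially the same route as the paper: both rest on the orthogonal decomposition of $(V_{0}-V_{1})-X$ into its $\mathcal{K}$-conditional expectation $\Delta V_{\mathcal{K}}-X$ and an $X$-free residual, so that $R(X)=E[(X-\Delta V_{\mathcal{K}})^{2}]+\kappa$ with a common constant $\kappa$ that cancels in the difference. Your additional attention to the integrability bookkeeping (establishing $Z\in L^{2}$ first and invoking the $L^{2}$-contraction property) is a welcome refinement of what the paper leaves implicit, but it is not a different argument.
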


\begin{proof} For every $X \in \mathcal{X}$,
  \begin{align*}
  R(X)		&= E[E[((V_{0}-V_{1})-X)^{2}|\mathcal{K}]]						&												\\
		&= E[((V_{0}-V_{1})-\Delta V_{\mathcal{K}})^{2}] + E[(\Delta V_{\mathcal{K}}-X)^{2}]	& \text{since $E[|V_{0}-V_{1}|] < \infty$, $E[\Delta V_{\mathcal{K}}] = E[V_{0}-V_{1}]$}	\\
		&											& \text{and $X$ and $\Delta V_{\mathcal{K}}$ are $\mathcal{K}$-measurable}
 \end{align*}
 Hence,
 \begin{align*}
  R(X)-R(X^{*})	&=	E[(\Delta V_{\mathcal{K}}-X)^{2}] - E[(\Delta V_{\mathcal{K}}-X^{*})^{2}]
 \end{align*}
\end{proof}

\begin{proof}[Proof of Theorem \ref{theorem:indemnity_1}]  Let $X^{*}_{1} = \max(0, \Delta V_{\mathcal{K}})$. For every $X \in \mathcal{X}$,
 \begin{align*}
  R(X)-R(X^{*}_{1})	&= E[(\Delta V_{\mathcal{K}}-X)^{2}] - E[(\Delta V_{\mathcal{K}}-X^{*}_{1})^{2}]												& \text{Lemma \ref{lemma:l2_trick}}							\\
			&=	E[((\Delta V_{\mathcal{K}}-X)^{2}-(\Delta V_{\mathcal{K}}-X^{*}_{1})^{2}) \cdot \mathcal{O}^{+}] + E[(X^{2}-2\Delta V_{\mathcal{K}}X) \cdot \mathcal{O}^{-}]	& \text{$X^{*}_{1}\cdot \mathcal{O}^{-} = 0$}						\\
			&\geq	E[((\Delta V_{\mathcal{K}}-X)^{2}-(\Delta V_{\mathcal{K}}-X^{*}_{1})^{2}) \cdot \mathcal{O}^{+}] + E[(X-0)^{2} \cdot \mathcal{O}^{-}]				& \text{$\Delta V_{\mathcal{K}}\cdot \mathcal{O}^{-} \leq 0$, $X \geq 0$}		\\
			&= E[(\Delta V_{\mathcal{K}}-X)^{2} \cdot \mathcal{O}^{+}] + E[(X-0)^{2} \cdot \mathcal{O}^{-}]										& \text{$X^{*}_{1}\cdot \mathcal{O}^{+}= \Delta V_{\mathcal{K}} \cdot \mathcal{O}^{+}$}	\\
			&= E[(X-X^{*}_{1})^{2}]
 \end{align*}
 Hence, if $P(X \neq X^{*}_{1}) > 0$, $R(X) > R(X^{*}_{1})$.
\end{proof}

\begin{lemma}
 \label{lemma:unique_solution}
 If $E[|V_{0}-V_{1}|] < \infty$ and $E[V_{0}-V_{1}] > 0$, then there exists a unique $\lambda^{*} \in \mathbb{R}$ such that
 \begin{align*}
  E[\max(0,\Delta V_{\mathcal{K}}-\lambda^{*})] = E[V_{0}-V_{1}]
 \end{align*}
\end{lemma}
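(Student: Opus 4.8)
The plan is to study the auxiliary function $g(\lambda) = E[\max(0, \Delta V_{\mathcal{K}} - \lambda)]$ of the single real variable $\lambda$ and to show that it attains the target level $E[V_0 - V_1]$ at exactly one point. First I would check that $g$ is well-defined and finite everywhere. By the conditional Jensen inequality, $E[|\Delta V_{\mathcal{K}}|] \leq E[E[|V_0 - V_1| \mid \mathcal{K}]] = E[|V_0 - V_1|] < \infty$, and since $0 \leq \max(0, \Delta V_{\mathcal{K}} - \lambda) \leq |\Delta V_{\mathcal{K}}| + |\lambda|$, the expectation defining $g(\lambda)$ is finite for each $\lambda \in \mathbb{R}$.

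Next I would establish the qualitative behaviour of $g$. For each fixed $a$, the map $\lambda \mapsto \max(0, a - \lambda)$ is non-increasing and $1$-Lipschitz, so integrating shows $g$ is non-increasing and $1$-Lipschitz, hence continuous. For the boundary behaviour, the pointwise bound $\max(0, \Delta V_{\mathcal{K}} - \lambda) \geq \Delta V_{\mathcal{K}} - \lambda$ combined with $E[\Delta V_{\mathcal{K}}] = E[V_0 - V_1]$ (the tower property) gives $g(\lambda) \geq E[V_0 - V_1] - \lambda \to +\infty$ as $\lambda \to -\infty$, while dominated convergence against the integrable envelope $\max(0, \Delta V_{\mathcal{K}})$ gives $g(\lambda) \to 0$ as $\lambda \to +\infty$. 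Existence of a solution then follows immediately from the intermediate value theorem, since $E[V_0 - V_1] \in (0, \infty)$ lies strictly between $0$ and the arbitrarily large values taken by the continuous function $g$.

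The step I expect to be the crux is uniqueness, and this is precisely where the hypothesis $E[V_0 - V_1] > 0$ enters. Suppose $\lambda_1 < \lambda_2$ are two solutions. A short case analysis of the integrand $\max(0, a - \lambda_2) - \max(0, a - \lambda_1)$ shows it equals $0$ when $a \leq \lambda_1$ and is strictly negative when $a > \lambda_1$; therefore $g(\lambda_1) - g(\lambda_2) = 0$ forces $P(\Delta V_{\mathcal{K}} > \lambda_1) = 0$, i.e. $\Delta V_{\mathcal{K}} \leq \lambda_1$ almost surely. But then $\max(0, \Delta V_{\mathcal{K}} - \lambda_1) = 0$ a.s., so $g(\lambda_1) = 0$, contradicting $g(\lambda_1) = E[V_0 - V_1] > 0$. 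Hence the solution is unique.

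The only genuinely delicate points in carrying this out are the two measure-theoretic justifications: the interchange of limit and expectation as $\lambda \to +\infty$, which the integrable dominating envelope handles, and the claim that the integrand strictly decreases exactly on $\{\Delta V_{\mathcal{K}} > \lambda_1\}$, which is the engine of the uniqueness argument. Everything else reduces to monotonicity and continuity of $g$ together with a single application of the intermediate value theorem.
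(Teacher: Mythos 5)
Your proposal is correct and follows essentially the same route as the paper: both study the function $\lambda \mapsto E[\max(0,\Delta V_{\mathcal{K}}-\lambda)]$, establish that it is $1$-Lipschitz (hence continuous) and monotone, obtain existence from the intermediate value theorem, and derive uniqueness from monotonicity together with the hypothesis $E[V_{0}-V_{1}]>0$. If anything, your uniqueness step is more careful than the paper's, which simply asserts that $f$ is ``decreasing''; your case analysis correctly isolates that strict decrease can fail only where the function already equals $0$, which the positivity hypothesis rules out at the solution.
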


\begin{proof}
 Let $f(\lambda) = E[\max(0,\Delta V_{\mathcal{K}}-\lambda)]$. Observe that $f$ is a decreasing function and that, for every $\epsilon > 0$ and $\lambda \in \mathbb{R}$, $|f(\lambda+\epsilon)-f(\lambda)| \leq \epsilon$. Hence, $f$ is a continuous function. Furthermore, 
 $$f(0) = E[\max(0, \Delta V_{\mathcal{K}})] \geq E[\Delta V_{\mathcal{K}}] = E[V_{0}-V_{1}]$$
 and, since $E[|V_{0}-V_{1}|] < \infty$, $E[|\Delta V_{\mathcal{K}}|] < \infty$ and
 $$\lim_{\lambda \rightarrow \infty}f(\lambda) \leq \lim_{\lambda \rightarrow \infty}\int_{\lambda}^{\infty}{x dF_{\Delta V_{\mathcal{K}}}(x)} = 0$$
 Since $f$ is continuous, $f(0) \geq E[V_{0}-V_{1}]$ and $\lim_{\lambda \rightarrow \infty}{f(\lambda)} = 0$ conclude from the intermediate value theorem that there exists a $\lambda^{*}$ such that $f(\lambda^{*}) = E[V_{0}-V_{1}]$. Since $f$ is decreasing, conclude that $\lambda^{*}$ is the only $\lambda \in \mathbb{R}$ such that $f(\lambda) = E[V_{0}-V_{1}]$.
\end{proof}

\begin{lemma}
 \label{lemma:variance_decomposition}
 Let $\lambda \in \mathbb{R}$, $X^{+}_{2} = \max(0, \Delta V_{\mathcal{K}}-\lambda)$ and $X^{-}_{2} = \min(0, \Delta V_{\mathcal{K}}-\lambda)$. For every $X \in \mathcal{X}$,
 \begin{align*}
  Var[X-\Delta V_{\mathcal{K}}]	&= Var[X-X^{+}_{2}]-2Cov[X,X^{-}_{2}]+2Cov[X^{+}_{2},X^{-}_{2}]+Var[X^{-}_{2}]
 \end{align*}
\end{lemma}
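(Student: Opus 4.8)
The plan is to reduce the claimed identity to the elementary variance-of-a-difference decomposition, exploiting the fact that the positive and negative truncations reconstruct $\Delta V_{\mathcal{K}}$. The key observation is the pointwise identity $\max(0,a)+\min(0,a)=a$, valid for every real $a$; applying it with $a=\Delta V_{\mathcal{K}}-\lambda$ gives $X^{+}_{2}+X^{-}_{2}=\Delta V_{\mathcal{K}}-\lambda$. Rearranging yields $\Delta V_{\mathcal{K}}=X^{+}_{2}+X^{-}_{2}+\lambda$, and therefore $X-\Delta V_{\mathcal{K}}=(X-X^{+}_{2})-X^{-}_{2}-\lambda$.

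Since variance is invariant under the addition of the constant $-\lambda$, I would write $Var[X-\Delta V_{\mathcal{K}}]=Var[(X-X^{+}_{2})-X^{-}_{2}]$. Setting $A=X-X^{+}_{2}$ and $B=X^{-}_{2}$, the standard formula $Var[A-B]=Var[A]-2Cov[A,B]+Var[B]$ immediately supplies the terms $Var[X-X^{+}_{2}]$ and $Var[X^{-}_{2}]$ on the right-hand side of the claim.

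The only remaining step is to expand the cross term by bilinearity of covariance: $Cov[A,B]=Cov[X-X^{+}_{2},X^{-}_{2}]=Cov[X,X^{-}_{2}]-Cov[X^{+}_{2},X^{-}_{2}]$. Substituting this into the variance-of-a-difference formula and distributing the factor $-2$ reproduces exactly the four terms $Var[X-X^{+}_{2}]-2Cov[X,X^{-}_{2}]+2Cov[X^{+}_{2},X^{-}_{2}]+Var[X^{-}_{2}]$, which is the asserted equality.

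There is essentially no obstacle here; the lemma is a bookkeeping identity whose entire content is the reconstruction $X^{+}_{2}+X^{-}_{2}=\Delta V_{\mathcal{K}}-\lambda$ together with the bilinearity of covariance. The one point deserving a word of care is that all the variances and covariances must be well defined, i.e.\ the relevant quantities lie in $\mathcal{L}^{2}$. As in the proof of Lemma \ref{lemma:l2_trick}, this is inherited from the standing hypotheses under which the lemma is invoked: the existence of an admissible $X$ with $R(X)<\infty$ together with $E[|V_{0}-V_{1}|]<\infty$ controls $\Delta V_{\mathcal{K}}$ in $\mathcal{L}^{2}$, and hence its truncations $X^{+}_{2}$ and $X^{-}_{2}$, so that every term in the identity is finite.
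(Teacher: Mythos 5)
Your proof is correct and follows essentially the same route as the paper's: both rest on the pointwise identity $X^{+}_{2}+X^{-}_{2}=\Delta V_{\mathcal{K}}-\lambda$, the invariance of variance under the constant shift $\lambda$, and bilinearity of covariance, differing only in the order in which the terms are grouped and expanded. No gap to report.
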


\begin{proof}
 \begin{align*}
  Var[X-\Delta V_{\mathcal{K}}]	&= Var[X-(\Delta V_{\mathcal{K}}-\lambda)]									\\
				&= Var[X] - 2Cov[X, \Delta V_{\mathcal{K}}-\lambda] + Var[V_{\mathcal{K}}-\lambda]				\\
				&= Var[X] - 2Cov[X, X^{+}_{2}+X^{-}_{2}] + Var[X^{+}_{2}+X^{-}_{2}]						\\
				&= Var[X] - 2Cov[X, X^{+}_{2}] -2Cov[X, X^{-}_{2}] + Var[X^{+}_{2}] + Var[X^{-}_{2}] + 2Cov[X^{+}_{2},X^{-}_{2}]	\\
				&= Var[X-X^{+}_{2}]-2Cov[X,X^{-}_{2}]+2Cov[X^{+}_{2},X^{-}_{2}]+Var[X^{-}_{2}]
 \end{align*}
\end{proof}

\begin{proof}[Proof of Theorem \ref{theorem:indemnity_2}] Let $\mathcal{X}_{1}$ = $\{X \in \mathcal{X}: E[X] = E[V_{0}-V_{1}] \text{ or } X=0\}$. If $X \in \mathcal{X}$, then $X \geq 0$. Therefore, if $E[V_{0}-V_{1}] \leq 0$, $\mathcal{X}_{1} = \{X=0\}$ and $\arg \min_{\{X \in \mathcal{X}_{1}\}}R(X) = 0$. Next, consider that $E[V_{0}-V_{1}] > 0$. It follows from Lemma \ref{lemma:unique_solution} that there exists a unique $\lambda^{*} \in \mathbb{R}$ such that $E[\max(0,\Delta V_{\mathcal{K}}-\lambda^{*})]=E[V_{0}-V_{1}]$. Let $X^{+}_{2}$ be defined as $\max(0, \Delta V_{\mathcal{K}}-\lambda^{*})$. It remains to show that $\arg \min_{X \in \mathcal{X}_{1}}{R(X)} = X^{+}_{2}$.

 For every $X \in \mathcal{X}_{1}$,
 \begin{align*}
  R(X)-R(X^{+}_{2})	&=	E[(\Delta V_{\mathcal{K}}-X)^{2}] - E[(\Delta V_{\mathcal{K}}-X^{+}_{2})^{2}]	& \text{Lemma \ref{lemma:l2_trick}}			\\
			&=	Var[\Delta V_{\mathcal{K}}-X] - Var[\Delta V_{\mathcal{K}}-X^{+}_{2}]		& \text{$E[X] = E[X^{+}_{2}]$}				\\
			&=	Var[X-X^{+}_{2}] -2Cov[X,X^{-}_{2}] +2Cov[X^{+}_{2},X^{-}_{2}]			& \text{Lemma \ref{lemma:variance_decomposition}}	\\
			&=	Var[X-X^{+}_{2}] -2(E[X \cdot X^{-}_{2}] - E[X^{+}_{2} \cdot X^{-}_{2}])	& E[X] = E[X^{+}_{2}]					\\
			&= 	Var[X-X^{+}_{2}] -2 E[X \cdot 	X^{-}_{2}]					& X^{+}_{2} \cdot X^{-}_{2} = 0				\\
			&\geq	Var[X-X^{+}_{2}]								& X \cdot X^{-}_{2} \leq 0
 \end{align*}
 Also,
 \begin{align*}
  R(0)-R(X^{+}_{2})	&=	E[\Delta V_{\mathcal{K}}^{2}] - E[(\Delta V_{\mathcal{K}}-X^{+}_{2})^{2}]	& \text{Lemma \ref{lemma:l2_trick}}			\\
			&= 	E[(X^{+}_{2}+X^{-}_{2})^{2}] - E[(X^{-}_{2})^{2}]				&							\\
			&=	E[(X^{+}_{2})^{2}]								& X^{+}_{2} \cdot X^{-}_{2} = 0				\\
			&=	E[X^{+}_{2}]^{2} + Var[X^{+}_{2}]						&							\\
			&>	E[\Delta V_{\mathcal{K}}]^{2} > 0						& E[\Delta V_{\mathcal{K}}] = E[V_{0}-V_{1}] > 0,	\\
			&											& E[X^{+}_{2}]>E[\Delta V_{\mathcal{K}}]
 \end{align*}

 Hence, if $E[V_{0}-V_{1}] > 0$, then for every $X \in \mathcal{X}_{1}$ such that $P(X \neq X^{+}_{2}) > 0$, $R(X) > R(X_{2}^{+})$.
\end{proof}

\end{document}